	\tikzstyle{frame} = [draw, -latex]
	\tikzstyle{line} = [draw]
	\tikzstyle{line2} = [draw, dashdotted]
	\tikzstyle{line3} = [draw, dashed]
	\tikzstyle{line3UD} = [draw, dashed]
	\tikzstyle{place} = [circle, draw=black, fill=white, thick, inner sep=2pt, minimum size=1mm]
	\tikzstyle{place2} = [circle, draw=black, fill=black, thick, inner sep=2pt, minimum size=1mm]
	\tikzstyle{placeRed} = [circle, draw=red, fill=red, thick, inner sep=2pt, minimum size=1mm]
	\tikzstyle{vertex} = [circle, draw=black, fill=black, thick, inner sep=2pt, minimum size=1mm]
\tikzstyle{decision} = [diamond, draw, fill=blue!20, 
\tikzstyle{block1} = [rectangle, draw, text width=8em, text centered, minimum height=4em]
\tikzstyle{block2} = [rectangle, draw, text width=3em, text centered, minimum height=4em]
\tikzstyle{block3} = [rectangle, draw, text width=11em, text centered, minimum height=12em, dashed,black]
\tikzstyle{block4} = [rectangle, draw, text width=11em, text centered, minimum height=18em, dashed,black]
\tikzstyle{block5} = [rectangle, draw, text width=11em, text centered, minimum height=32em, dashed,black]
\tikzstyle{block6} = [rectangle, draw, text width=11em, text centered, minimum height=18.5em, dashed,black]
\tikzstyle{block7} = [rectangle, draw, text width=11em, text centered, minimum height=11.8em, dashed,black]
\tikzstyle{line01} = [draw, -latex']
\tikzstyle{line02} = [draw, latex'-latex']
\newcommand{\bbm}{\begin{bmatrix}}
\newcommand{\ebm}{\end{bmatrix}}
\newtheorem{corollary}{\bfseries Corollary}
\newtheorem{definition}{\bfseries Definition }
\newtheorem{lemma}{\bfseries Lemma}
\newtheorem{remark}{\bfseries Remark}
\newtheorem{theorem}{\bfseries Theorem}
\newtheorem{assumption}{\bfseries Assumption}
\newtheorem{observation}{\bfseries Observation}
\newtheorem{claim}{\bfseries Claim}
\newtheorem{example}{\bfseries Example}
\definecolor{MG}{rgb}{0,0.45,0.08}
\title{Cooperative opinion dynamics on multiple interdependent topics: Modeling and analysis}
\author{Hyo-Sung Ahn$^{1}$, Quoc Van Tran$^{1}$, Minh Hoang Trinh$^{1}$, Kevin L. Moore$^{2}$, Mengbin Ye$^{3}$, and Ji Liu$^{4}$

\thanks{${}^{1}$School of Mechanical Engineering, Gwangju Institute of Science and Technology (GIST), Gwangju, Korea. E-mails: {\tt\small hyosung@gist.ac.kr; tranvanquoc@gist.ac.kr; trinhhoangminh@gist.ac.kr}}
\thanks{${}^{2}$Department of Electrical Engineering, Colorado School of Mines, Golden, CO, USA. E-mails: {\tt\small kmoore@mines.edu}}
\thanks{${}^{3}$Systems and Control, Faculty of Science and Engineering, University of Groningen, Groningen 9747 AG, The Netherlands {\tt\small m.ye@rug.nl}}
\thanks{${}^{4}$Department of Electrical and Computer Engineering, 211 Light Engineering, Stony Brook University,
Stony Brook, NY, USA {\tt\small ji.liu@stonybrook.edu}}
}
\begin{document}
\maketitle
\thispagestyle{empty}
\pagestyle{empty}

\begin{abstract} 
To model the interdependent couplings of  multiple topics, we develop a set of rules for opinion updates of a group of agents. The rules are used to design or assign values to the elements of interdependent weighting matrices. The cooperative and anti-cooperative couplings are modeled in both the inverse-proportional and proportional feedbacks. The behaviors of cooperative opinion dynamics are analyzed using a null space property of state-dependent matrix-weighted Laplacian matrices and a Lyapunov candidate. Various consensus properties of state-dependent matrix-weighted Laplacian matrices are predicted according to the intra-agent network topology and interdependency topical coupling topologies. 
\end{abstract}
\begin{IEEEkeywords}
Cooperative opinion dynamics, Consensus, Matrix-weighted, Multiple interdependent topics
\end{IEEEkeywords}

\section{Introduction}
The problem of opinion dynamics has attracted a lot of attention recently due to its applications to decision-making processes and evolution of public opinions \cite{Jiang_etal_tsmcs_2018}. The opinion dynamics arises between persons who interact with each other to influence others' opinions or to update his or her opinion \cite{Cho_tcss_2018}. The opinion dynamics has been also studied in control territory or in signal processing recently. For examples,  control via leadership with state and time-dependent interactions \cite{Dietrich_etal_tac_2018}, game theoretical analysis of  the Hegselmann-Krause Model \cite{Etesami_Sasar_tac_2015}, Hegselmann-Krause dynamics for the continuous-agent model  \cite{Wedin_Hegarty_tac_2015}, and the impact of random actions  \cite{Leshem_Scaglione_tsipn_2018} 
have been investigated. The opinion dynamics under consensus setups has been also studied \cite{Semonsen_etal_tcyb_2018,Dong_etal_ieeebigdata_2018}. In opinion dynamics under scalar-based consensus laws, the antagonistic interactions in some edges are key considerations \cite{Altafini_tac_2013,Hendrickx_cdc_2014,Proskurnikov_etal_tac_2016}. The antagonistic interactions may represent repulsive or anti-cooperative characteristics between neighboring agents. In traditional consensus, all the interactions between agents are attractive one; so the dynamics of the traditional consensus has a contraction property, which eventually ensures a synchronization of agents. However, if there is an antagonistic interaction, a consensus may not be achieved and the Laplacian matrix may have negative eigenvalues \cite{Boyd_icm_2006}. Thus, in the existing opinion dynamics, the antagonistic interactions are modeled such that the Laplacian matrix would not have any negative eigenvalues. Specifically, in \cite{Altafini_tac_2013}, signs of adjacent weights are used to model antagonistic interactions resulting in Laplacian matrix with absolute diagonal elements, and in \cite{Hendrickx_cdc_2014}, the author has extended the model of \cite{Altafini_tac_2013} to the one that allows arbitrary time-dependent interactions. In \cite{Proskurnikov_etal_tac_2016}, they have further considered time-varying signed graphs under the setup of the antagonistic interactions. On the other hand, opinion dynamics with  state constraints was also examined when the agents are preferred to attach to the initial opinion, i.e., with stubborn agents \cite{Cao_acc_2015}. Recently, in \cite{Liu_etal_tac_2018}, they have examined a joint impact of the dynamical properties of individual agents and the interaction topology among them on polarizability, consensusability, and neutralizability, with a further extension to heterogeneous systems with non-identical dynamics.

Unlike the scalar-consensus based updates, there also have been some works on opinion dynamics with matrix weighed interactions. Recently, opinion dynamics with multidimensional or multiple interdependent topics have been reported in \cite{Parsegov_etal_tac_2017,Mengbin_etal_auto_sub_2017}. In \cite{Parsegov_etal_tac_2017}, multidimensional opinion dynamics based on Friedkin and Johnsen (FJ) model and DeGroot models were analyzed in the discrete-time domain. The continuous-time version of \cite{Parsegov_etal_tac_2017} with stubborn agents was presented and analyzed in \cite{Mengbin_etal_auto_sub_2017}. The DeGroot-Friedkin model was also analyzed to conclude that it has an exponential convergent equilibrium point  \cite{Mengbin_etal_tac_2018}. Also in  \cite{Mengbin_etal_tac_2018}, they considered the dynamic network topology to evaluate the propagation property of the social power. Since the topics are interdependent and coupled with each other, these works may be classified as matrix-weighted consensus problems \cite{Minh_etal_auto_2018}. Opinion dynamics under leader agents with matrix weighted couplings was studied in \cite{Minh_etal_ascc_2017}.

In this paper, we would like to present a new model for opinion dynamics on multiple interdependent topics under a state-dependent matrix weighted consensus setup. We first provide a model for characterizing the coupling effects of multiple interdependent topics. We consider both the proportional and inverse proportional feedback effects on diagonal and off-diagonal terms. The cooperative dynamics and non-cooperative dynamics are modeled using the signs of diffusive couplings of each topic. Then, we provide some analysis on the convergence or consensus of the topics. Two results will be presented according to the property of weighting matrices. The first result is developed when the coupling matrices are positive semidefinite. When the coupling matrices are positive semidefinite, exact conditions for complete opinion consensus and cluster consensus are provided. Then, as the second result, when the coupling matrices are indefinite, we provide a sufficient condition for a complete opinion consensus. 

Consequently, the main contributions of this paper can be summarized as follows. First, a model for opinion dynamics is established. The connectivities are characterized by interaction topology between agents and coupling topology among the topics. Thus, the overall system has two-layer network topologies. Second, analysis for complete opinion consensus and partial opinion consensus is presented for both the cases when the coupling matrices are positive semidefinite and indefinite. As far as the authors are concerned, this is the first paper that presents a detailed model for inverse-proportional and proportional feedback opinion dynamics along with the convergence analysis. 
This paper is organized as follows. Section~\ref{section_model} provides a detailed process for building models for opinion dynamics. Section~\ref{section_analysis} presents the analysis for convergence of cooperative opinion dynamics. Section~\ref{section_sim} is dedicated to simulation results and Section~\ref{section_con} concludes this paper with some discussions.


\section{Modeling} \label{section_model}
There are $d$ different topics that may be of interests to the members of a society. Let the set of topics be denoted as $\mathcal{T}=\{1, \ldots, d\}$ and let the opinion vector associated with the member $i$  be written as $x_{i} = (x_{i,1}, x_{i,2}, \ldots, x_{i,d})^T$. We can write the $i$-th agent's opinion about the $p$-th topic as $x_{i,p}$. Each member (or can be called agent) has its initial opinion on the topics as $x_{i,k}(t_0) = (x_{i,1}(t_0), x_{i,2}(t_0), \ldots, x_{i,d}(t_0))^T$. The opinion dynamics of agent $i$ can be modeled as
\begin{align}
\left(\begin{array}{c} \dot{x}_{i,1} \\ \dot{x}_{i,2} \\ \vdots \\ \dot{x}_{i,d} \end{array} \right) &=\sum_{j \in \mathcal{N}_i}^n \left[\begin{array}{ccc}   
a_{1,1}^{i,j}   & \ldots  & a_{1,d}^{i,j}  \\
a_{2,1}^{i,j}   & \ldots  & a_{2,d}^{i,j}  \\
\vdots        &   \ddots  &   \vdots  \\
a_{d,1}^{i,j}  & \ldots  & a_{d,d}^{i,j}  \\
\end{array} \right]  \left(\begin{array}{c} {x}_{j,1} - {x}_{i,1} \\ {x}_{j,2} - {x}_{i,2} \\ \vdots \\ {x}_{j,d} - {x}_{i,d} \end{array} \right) \nonumber \\
\triangleq \dot{x}_i &= \sum_{j \in \mathcal{N}_i}^n A^{i,j} (x_j - x_i) \label{eq_agent_dynamics}
\end{align} 
where $A^{i,j} \in \Bbb{R}^{d \times d}$ is the matrix weighting for the edge $(i,j) \in \mathcal{E}$ and $i \in \mathcal{V}$, and $\vert \mathcal{E} \vert =m$ and $\vert \mathcal{V} \vert =n$. The practical meaning of \eqref{eq_agent_dynamics} is that each member of a society may have its own opinion about the topics, and the opinions are inter-coupled with the opinions of the neighboring agents. Thus, the matrix $A^{i,j}$ characterizes the logical reasoning of agent $i$ with opinions from agent $j$. The neighborhoods of agents are determined by the interaction graph $\mathcal{G}=(\mathcal{V}, \mathcal{E})$. If a topic in member $i$ has at least one connection to another topic or the same topic of another agent $j$, then two agents $i$ and $j$ are called \textit{connected}. The terminology \textit{connection} or \textit{connected} is used for defining the connection in the level of agents. When there are connections between the topics, it is called \textit{coupled} or \textit{couplings} between topics.  Thus, the terminology \textit{coupled} or \textit{couplings} is used in the level of topics. Therefore, based on the terminological definitions, if there is at least one coupling between the topics of agents $i$ and $j$, then two agents $i$ and $j$ can be considered as connected. However, even though two agents are connected, it does not mean that a topic in an agent is connected to another topic of the other agent. The formal definitions are given as follows.
\begin{definition}
Two agents $i$ and $j$ are considered connected if $A^{i,j}$ is not identically zero, i.e., $A^{i,j} \neq 0$. The topology for overall network connectivities is represented by the interaction graph $\mathcal{G}=(\mathcal{V}, \mathcal{E})$ where the edge set $\mathcal{E}$ characterizes the connectivities between agents. 
If there is a spanning tree in the network $\mathcal{G}$, it is called connected.  
For a topic $p \in \mathcal{T}$, the graph is called $p$-coupled if the elements of the set $\{a_{p,p}^{i,j}, ~\forall (i,j) \in \mathcal{E}\}$ are connected for the topic $p$. The topology for the topic $p$ is defined by the graph $\mathcal{G}_p=(\mathcal{V}_p, \mathcal{E}_p)$, where $p \in \mathcal{T}$, and $\mathcal{V}_p= \{x_{1,p}, x_{2,p}, \ldots, x_{n,p}\}$ and $\mathcal{E}_p = \{ (i,j)~:~ 
a_{p,p}^{i,j} \neq 0 \}$. If it is $p$-coupled for all topics $p \in \mathcal{T}$, it is called all-topic coupled.
\end{definition}
\begin{definition}
For the edge $(i,j)$, let the topology for the couplings among topics be denoted as $\mathcal{G}_{i,j}= (\mathcal{V}_{i,j}, \mathcal{E}_{i,j})$, which is called coupling graph for the edge $(i,j)$, where $\mathcal{V}_{i,j}$ includes all the topics contained in the agents $i$ and $j$, and $\mathcal{E}_{i,j}$ includes all the couplings. If $\mathcal{G}_{i_1,j_1}=\mathcal{G}_{i_2,j_2}$ for all edges $(i_1, j_1) \neq (i_2, j_2)$, then all the coupling topologies of the society are equivalent. If all the coupling topologies between agents are equivalent, it is called homogeneous-coupling network. Otherwise, it is called heterogeneous-coupling network. 
\end{definition}
Based on the above definitions, we can see that every $\mathcal{G}_p$ is disconnected even though $\mathcal{G}$ is connected. If the union of all $\mathcal{G}_p$ is connected, then $\mathcal{G}$ is also connected. Also, since each agent has the same set of topics,  $\mathcal{V}_{i,j} = \mathcal{T}$ for all $(i,j) \in \mathcal{E}$.

\begin{assumption}The coupling between neighboring agents is symmetric, i.e., if there exists a coupling $(p, q)$ in $\mathcal{E}_{i,j}$, there also exists a coupling $(q,p)$ in $\mathcal{E}_{i,j}$.
\end{assumption}

\begin{example}
Fig.~\ref{connected_coupled} shows the concepts of ``\textit{connected}'' and ``\textit{coupled}'' in neighboring agents. The coupling graph $\mathcal{G}_{i,j}$ can be determined as $\mathcal{G}_{i,j} = (\mathcal{V}_{i,j}, \mathcal{E}_{i,j})$ where $\mathcal{V}_{i,j} =\{p, q, r\}$ and $\mathcal{E}_{i,j} = \{ (p,q), (q, r), (q,p), (r, q) \}$. 
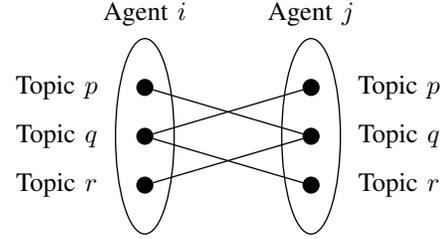
\begin{figure}
\centering
\begin{tikzpicture}[scale=0.65]

\node[place, black] (node21) at (-2.2,3) [label=left:$$] {};
\node[place, black] (node22) at (-2.2,2) [label=left:$$] {};
\node[place, black] (node23) at (-2.2,1) [label=left:$$] {};

\node[place, black] (node31) at (1.2,3) [label=left:$$] {};
\node[place, black] (node32) at (1.2,2) [label=left:$$] {};
\node[place, black] (node33) at (1.2,1) [label=left:$$] {};


\draw[black, line width=.5pt] (-2.2,2) ellipse (0.6cm and 2cm);
\draw[black, line width=.5pt] (1.2,2) ellipse (0.6cm and 2cm);

\node[black] at (-2.2, 4.5) {Agent $i$};
\node[black] at (1.2, 4.5) {Agent $j$};

\node[black] at (-4.0, 3) {Topic $p$};
\node[black] at (-4.0, 2) {Topic $q$};
\node[black] at (-4.0, 1) {Topic $r$};

\node[black] at (3.0, 3) {Topic $p$};
\node[black] at (3.0, 2) {Topic $q$};
\node[black] at (3.0, 1) {Topic $r$};



\draw (node21) [line width=0.5pt] -- node [left] {} (node32);
\draw (node22) [line width=0.5pt] -- node [right] {} (node33);

\draw (node22) [line width=0.5pt] -- node [left] {} (node31);
\draw (node23) [line width=0.5pt] -- node [right] {} (node32);



\end{tikzpicture}
\caption{Connected vs. Coupled: Topics $p$ and $q$, and $q$ and $r$ are coupled in the coupling graph $\mathcal{G}_{i,j}$; so the agents $i$ and $j$ are connected. But, although the agents $i$ and $j$ are connected, for example, the topics $p$ and $r$ are not coupled.}
\label{connected_coupled}
\end{figure}
\end{example}

Each agent updates its coefficients in the matrix $A^{i,j}$ in the direction of cooperation or in the direction of antagonism. For a cooperative update, the rules for opinion update are formulated as:
\begin{itemize}
\item The diagonal terms: If $a_{k,k}^{i,j}$ is positive and as it increases, the tendency of agreement between $x_{j,k}$ and $x_{i,k}$ increases. Otherwise, if $a_{k,k}^{i,j}$ is negative and as it increases to bigger negative value, the tendency of anti-agreement between $x_{j,k}$ and $x_{i,k}$ becomes significant. 
\item The off-diagonal terms: Let us consider the effect of $a_{2,1}^{i,j}$. We can consider the following four cases:
\begin{enumerate}
\item Case $1$: $(x_{j,2}- x_{i,2}) \geq 0$  and $(x_{j,1}- x_{i,1}) \geq 0$
\item Case $2$: $(x_{j,2}- x_{i,2}) \geq 0$  and $(x_{j,1}- x_{i,1}) < 0$
\item Case $3$: $(x_{j,2}- x_{i,2}) < 0$  and $(x_{j,1}- x_{i,1}) \geq 0$
\item Case $4$: $(x_{j,2}- x_{i,2}) < 0$  and $(x_{j,1}- x_{i,1}) < 0$
\end{enumerate}
When $(x_{j,2}- x_{i,2}) \geq 0$, agent $i$ needs to increase the value of $x_{i,2}$ to reach a consensus to $x_{j,2}$. Otherwise, if $(x_{j,2}- x_{i,2}) < 0$, agent $i$ needs to decrease the value of $x_{i,2}$ to reach a consensus to $x_{j,2}$. So, for the cases $1$ and $2$, to enhance the agreement tendency, it needs to increase the value of $x_{i,2}$, by way of multiplying $a_{2,1}^{i,j}$ and $(x_{j,1}- x_{i,1})$. Thus, when $(x_{j,1}- x_{i,1}) \geq 0$, we can select $a_{2,1}^{i,j} >0$; but when  $(x_{j,1}- x_{i,1}) < 0$, we can select $a_{2,1}^{i,j} <0$. On the other hand, in the case of  $(x_{j,2}- x_{i,2}) < 0$, we can select $a_{2,1}^{i,j} <0$ when $(x_{j,1}- x_{i,1}) \geq 0$, or we can select  $a_{2,1}^{i,j} > 0$ when $(x_{j,1}- x_{i,1}) < 0$. For the anti-consensus update, $a_{2,1}^{i,j}$ should be selected with opposite signs. 
\end{itemize}

The effects of diagonal terms can be modeled as follows: 
\begin{definition} \label{def_consensus_effects} Direct coupling effects in diagonal terms:
\begin{itemize}
\item Proportional feedbacks: A close opinion between two agents acts as for increasing the consensus tendency between them.
\item Inverse proportional feedbacks: A quite different opinion between two agents acts as for increasing the consensus tendency between them.
\end{itemize}
\end{definition}
The off-diagonal terms need to be designed carefully taking account of the coupling effects in different topics. 
\begin{definition} \label{def_coupling_effects} Cross coupling effects in off-diagonal terms:
\begin{itemize}
\item Proportional feedbacks: A close opinion in one topic acts as for increasing the consensus tendency of other topics.
\item Inverse proportional feedbacks: A quite different opinion in one topic acts as for increasing the consensus tendency of other topics.
\end{itemize}
\end{definition}

\begin{definition} (Completely and partial opinion consensus, and clusters) If a consensus is achieved for all topics, i.e., $x_{j,p} = x_{i,p}$ for all $p \in \mathcal{T}$, it is called a complete opinion consensus. In this case, there exists only one cluster. Otherwise, if a part of topics is agreed, it is called partial opinion consensus. When only a partial opinion consensus is achieved, there could exist clusters $\mathbb{C}_k, k=1, \ldots, q$ such that $\mathbb{C}_i \cap \mathbb{C}_j = \emptyset$ for different $i$ and $j$, and $\sum_{k=1}^q \mathbb{C}_k = \{x_1, x_2, \ldots, x_n\}$, and in each cluster, $x_i = x_j$ when $x_i$ and $x_j$ are elements of the same cluster, i.e., $x_i, x_j \in  \mathbb{C}_k$. \label{def_com_par_opinion_con}
\end{definition}

\begin{definition} (Complete clustered consensus)
If the opinions of agents are completely divided without ensuring any partial opinion consensus between them, it is called completely clustered consensus. 
\end{definition}

In the case of a partial opinion consensus as per \textit{Definition~\ref{def_com_par_opinion_con}}, the clusters are not completely divided clusters, i.e., in two different clusters, some topics may reach a consensus. 
\begin{example}
In Fig.~\ref{partial_complete_consensus}, there are five agents, with three topics. The agents reach a consensus on the topic $p=3$. But, on other topics $p=1, 2$, they do not reach a consensus. For the topic $1$, there are two clusters (i.e., agents $1,2,3$ in one cluster, and agents $4, 5$ in another cluster), and for the topic $2$, there are also two clusters (i.e., agents $1$ in one cluster, and agents $2, 3, 4, 5$ in another cluster). So, overall, the network has a consensus in a part of topics, but they do not reach a consensus on the other topics. So, no complete opinion consensus is achieved, and no complete clustered consensus is achieved. Consequently, there are three clusters as $\mathbb{C}_1 =\{x_1\}$, $\mathbb{C}_2 =\{x_2, x_3\}$, and $\mathbb{C}_3 =\{x_4, x_5\}$.
\begin{figure}
\centering
\begin{tikzpicture}[scale=0.65]
\node[place2, black] (node11) at (-3.6,3) [label=left:$ $] {};
\node[place, black] (node12) at (-3.6,2) [label=left:$ $] {};
\node[place2, black] (node13) at (-3.6,1) [label=left:$ $] {};

\node[place2, black] (node21) at (-1.2,3) [label=left:$ $] {};
\node[place2, black] (node22) at (-1.2,2) [label=left:$ $] {};
\node[place2, black] (node23) at (-1.2,1) [label=left:$ $] {};

\node[place2, black] (node31) at (1.2,3) [label=left:$ $] {};
\node[place2, black] (node32) at (1.2,2) [label=left:$ $] {};
\node[place2, black] (node33) at (1.2,1) [label=left:$ $] {};

\node[place, black] (node41) at (3.6,3) [label=left:$ $] {};
\node[place2, black] (node42) at (3.6,2) [label=left:$ $] {};
\node[place2, black] (node43) at (3.6,1) [label=left:$ $] {};

\node[place, black] (node51) at (6.0,3) [label=left:$ $] {};
\node[place2, black] (node52) at (6.0,2) [label=left:$ $] {};
\node[place2, black] (node53) at (6.0,1) [label=left:$ $] {};

\draw[black, line width=.5pt] (-3.6,2) ellipse (0.6cm and 2cm) ;
\draw[black, line width=.5pt] (-1.2,2) ellipse (0.6cm and 2cm);
\draw[black, line width=.5pt] (1.2,2) ellipse (0.6cm and 2cm);
\draw[black, line width=.5pt] (3.6,2) ellipse (0.6cm and 2cm);
\draw[black, line width=.5pt] (6.0,2) ellipse (0.6cm and 2cm);

\draw[dashed] (-4.6,5.0) rectangle (-2.5,-0.1);
\draw[dashed] (-2.3,5.0) rectangle (2.3,-0.1);
\draw[dashed] (2.5,5.0) rectangle (7.1,-0.1);


\node[black] at (-3.6, 4.5) {Agent 1};
\node[black] at (-1.2, 4.5) {Agent 2};
\node[black] at (1.2, 4.5) {Agent 3};
\node[black] at (3.6, 4.5) {Agent 4};
\node[black] at (6.0, 4.5) {Agent 5};

\node[black] at (-3.6, 5.5) {$\mathbb{C}_1$};
\node[black] at (0, 5.5) {$\mathbb{C}_2$};
\node[black] at (4.8, 5.5) {$\mathbb{C}_3$};

\node[black] at (-5.6, 3) {Topic 1};
\node[black] at (-5.6, 2) {Topic 2};
\node[black] at (-5.6, 1) {Topic 3};


\draw (node11) [line width=1.0pt] -- node [left] {} (node21);
\draw (node13) [line width=1.0pt] -- node [below] {} (node23);

\draw (node21) [line width=1.0pt] -- node [left] {} (node31);
\draw (node22) [line width=1.0pt] -- node [right] {} (node32);
\draw (node23) [line width=1.0pt] -- node [left] {} (node33);

\draw (node32) [line width=1.0pt] -- node [right] {} (node42);
\draw (node33) [line width=1.0pt] -- node [below] {} (node43);

\draw (node41) [line width=1.0pt] -- node [left] {} (node51);
\draw (node42) [line width=1.0pt] -- node [right] {} (node52);
\draw (node43) [line width=1.0pt] -- node [below] {} (node53);

\end{tikzpicture}
\caption{Partial opinion consensus and clusters. The topic $3$ reaches a consensus, while topics $1$ and $2$ do not reach a consensus.}
\label{partial_complete_consensus}
\end{figure}
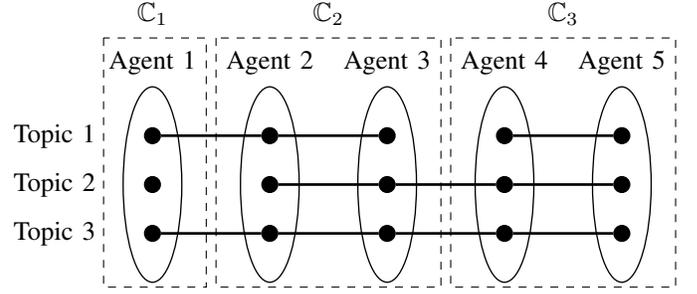
\end{example}

The consensus and coupling effects given in the \textit{Definition~\ref{def_consensus_effects}} and \textit{Definition~\ref{def_coupling_effects}} can be mathematically modeled as follows. 

\subsubsection{Inverse-proportional feedbacks} When the values of opinions of two agents are quite different, the coupling effects are more significant, which may be against from a natural phenomenon (ex, gravitational force). That is, when two opinions are close, there could be more attraction force. In inverse-proportional feedbacks, there will be more coupling effects when the values of opinions are quite different. 
\begin{itemize}
\item Direct coupling in diagonal terms:
\begin{align}  \label{inverse_proportional_diagonal_dynamics}
a_{p,p}^{i,j} &= k_{p,p}^{i,j} 
\end{align}
where $k_{p,p}^{i,j} =  k_{p,p}^{j,i} >0$. 

\item Cross coupling in off-diagonal terms:
\begin{align}  \label{inverse_proportional_coupling_dynamics}
a_{p,q}^{i,j} &= k_{p,q}^{i,j} \times   \text{sign}( x_{j,p} - x_{i,p}) \times  \text{sign}(x_{j,q} - x_{i,q}) 
\end{align}
where $k_{p,q}^{i,j} =  k_{p,q}^{j,i} = k_{q,p}^{i,j} >0$, and $\text{sign}( x_{j,p} - x_{i,p})=1$ when $x_{j,p} - x_{i,p} \geq 0$ and $\text{sign}( x_{j,p} - x_{i,p})=-1$ when $x_{j,p} - x_{i,p} < 0$. 
\end{itemize}

\subsubsection{Proportional feedbacks}  In proportional feedbacks, there will be less coupling effects when the values of opinions are quite different. 
\begin{itemize}
\item Direct coupling in diagonal terms:
\begin{align} \label{proportional_diagonal_dynamics}
a_{p,p}^{i,j} = \frac{  k_{p,p}^{i,j} }{ c_2 \Vert x_{j,p} - x_{i,p} \Vert^2 + c_1 \Vert x_{j,p} - x_{i,p} \Vert  + c_0  }
\end{align}
where $k_{p,p}^{i,j} =  k_{p,p}^{j,i} >0$. 

\item Cross coupling in off-diagonal terms:
\begin{align} \label{proportional_coupling_dynamics2}
a_{p,q}^{i,j} = \frac{  k_{p,q}^{i,j} \times   \text{sign}( x_{j,p} - x_{i,p}) \times  \text{sign}(x_{j,q} - x_{i,q})   }{ (c_{1} \Vert x_{j,p} - x_{i,p} \Vert + c_{0}) (c_{1} \Vert x_{j,q} - x_{i,q} \Vert + c_{0})  } 
\end{align}
where $k_{p,q}^{i,j} =  k_{q,p}^{i,j} >0$ and $k_{p,q}^{i,j} =  k_{p,q}^{j,i} >0$, and $c_{1}$ and $c_{0}$ are positive constants. Then, we can have $(A^{i,j})^T = A^{i,j}$ and $A^{i,j}= A^{j,i}$. Thus, with the model \eqref{proportional_coupling_dynamics2}, the Laplacian matrix is of symmetric. 
\end{itemize}

Note that in the above coupling models, if $(p,q) \in \mathcal{E}_{i,j}$, then $a_{p,q}^{i,j}\neq 0$, otherwise, $a_{p,q}^{i,j}=0$. So, the matrix $A^{i,j}=[a_{p,q}^{i,j}]$ is the weighting matrix for the topics between two agents $i$ and $j$. But, the matrix $A^{i,j}$ is state- and sign-dependent, while the matrix $K^{i,j}=[k_{p,q}^{i,j}]$ is a matrix which defines the topological characteristics between topics of the neighboring agents. The matrix $K^{i,j}$ is called  coupling matrix, and it is a constant matrix. 

It is remarkable that the matrix $A^{i,j}$ is not an adjacency matrix, and neither is the matrix $K^{i,j}$. But, they are similar to an adjacency matrix. For example, if there is no direct coupling  between the same topics, then $K^{i,j}$ is the adjacency matrix for characterizing the couplings between the topics of two neighboring agents. 
On the other hand, if all the topics are coupled (i.e., a topic of agent $i$ is coupled to all the topics of neighboring agent $j$), then $K^{i,j} - \mathbb{I}_d$ is the adjacency matrix ignoring the self-loops. The direct coupling in the $p$-th topic implies that there is a self-loop in the $p$-th topic node. 
\begin{example} Let us consider the following coupling matrices.
\begin{align}
K_1^{i,j} =\left[\begin{array}{ccccc}   
2 &  1  &  1  & 1  & 1  \\
1 &  2  &  1  & 1  & 1  \\
1 &  1  &  2  & 1  & 1  \\
1 &  1  &  1  & 3  & 1  \\
1 &  1  &  1  & 1  & 1  \\
\end{array} \right], ~ K_2^{i,j} =\left[\begin{array}{ccccc}   
0 &  1  &  1  & 1  & 1  \\
1 &  0  &  1  & 1  & 1  \\
1 &  1  &  0  & 1  & 1  \\
1 &  1  &  1  & 0  & 1  \\
1 &  1  &  1  & 1  & 0  \\
\end{array} \right] \nonumber
\end{align} 
The matrix $K_1^{i,j}$ means that any topic in agent $i$ is coupled to all the topics in $j$, while the matrix $K_2^{i,j}$ means that any topic in agent $i$ is coupled to all the topics in $j$, but $x_{i,p}$ is not coupled to $x_{j,p}$ (i.e., no direct coupling). 
\end{example}

The anti-consensus can be simply modeled by adding the minus sign to the elements of the coupling matrix, i.e., $k_{p,q}^{i,j}$. Thus, there are four types of couplings: proportional coupling, proportional anti-coupling, inverse-proportional coupling, and inverse-proportional anti-coupling. The dynamics with anti-consensus terms is called \textit{non-cooperative opinion dynamics}, while the dynamics without anti-consensus terms is called \textit{cooperative opinion dynamics}.
Note that in existing traditional consensus works, the inverse proportional diagonal terms, i.e., \eqref{inverse_proportional_diagonal_dynamics}, are only used for the consensus couplings. 


The dynamics \eqref{eq_agent_dynamics} can be concisely rewritten as:
\begin{align} \label{eq_laplacian_dynamics}
\dot{x} = - \mathbb{L}(x_1, \ldots, x_n) x
\end{align}
where the Laplacian is computed as
\begin{align}
\mathbb{L} = \left[\begin{array}{cccc}
\sum_{j \in \mathcal{N}_i} A^{1,j}   & - A^{1,2} & \ldots      & - A^{1,n} \\
- A^{2,1}   & \sum_{j \in \mathcal{N}_i} A^{2,j}   & \ldots      & - A^{2,n} \\
\vdots   & \vdots & \ddots &  \vdots \\
- A^{n,1}   & - A^{n,2}   & \ldots      & \sum_{j \in \mathcal{N}_i} A^{n,n} \\
\end{array}\right] \label{symbol_Laplacian}
\end{align}
Note that in the dynamics \eqref{eq_laplacian_dynamics}, the Laplacian  $\mathbb{L}$ is dependent upon the sign of $x_{j,p}-x_{i,p}$; thus, it is discontinuous when the sign changes abruptly. To be a continuous function, the sign function may be modified as a sigmoid function as:
\begin{align}
sign(x_{j,p} - x_{i,p}) \triangleq \frac{ 2 }{ 1 + e^{-k_e (x_{j,p} - x_{i,p})}} - 1
\end{align}
where $k_e$ is a sufficiently large positive constant. We remark that the sign function can be also changed to a signum function
$\text{sig}( x_{j,p} - x_{i,p}) |x_{j,p} - x_{i,p}|^\alpha$, where $0 < \alpha < 1$.

In the case of the inverse-proportional feedback laws, we can see that $a_{p,q}^{i,j} =  a_{q,p}^{i,j}$, and $a_{p,q}^{i,j} =  a_{p,q}^{j,i}$. Then,  the Laplacian matrix $\mathbb{L}$ is of symmetric.
Consequently, for the inverse-proportional consensus couplings, we can rewrite \eqref{eq_agent_dynamics} as:
\begin{align}
\left(\begin{array}{c} \dot{x}_{i,1} \\ \dot{x}_{i,2} \\ \vdots \\ \dot{x}_{i,d} \end{array} \right) &=\sum_{j \in \mathcal{N}_i}^n \left[\begin{array}{ccc}   
\text{sgn}_{1,1}^{i,j} k_{1,1}^{i,j} & \ldots  & \text{sgn}_{1,d}^{i,j} k_{1,d}^{i,j}  \\
\text{sgn}_{2,1}^{i,j} k_{2,1}^{i,j} &  \ldots  & \text{sgn}_{2,d}^{i,j} k_{2,d}^{i,j}  \\
\vdots      &   \ddots  &   \vdots  \\
\text{sgn}_{d,1}^{i,j} k_{d,1}^{i,j} & \ldots  & \text{sgn}_{d,d}^{i,j} k_{d,d}^{i,j}  \\
\end{array} \right] \nonumber\\
&~~~~~ \times \left(\begin{array}{c}  {x}_{j,1} - {x}_{i,1}  \\  {x}_{j,2} - {x}_{i,2}  \\ \vdots \\  {x}_{j,d} - {x}_{i,d} \end{array} \right)  \label{eq_inverse_pro_consensus} 
\end{align} 
where $\text{sgn}_{p,q}^{i,j} \triangleq  \text{sign}( x_{j,p} - x_{i,p}) \times  \text{sign}(x_{j,q} - x_{i,q})$ and $\text{sgn}_{p,q}^{i,j} = \text{sgn}_{q,p}^{i,j}$. If there are some inverse-proportional anti-consensus couplings between some topics, then some elements in \eqref{eq_inverse_pro_consensus} will have negative signs. For example, if the $1$-st topic and $2$-nd topic are anti-consensus coupled, then the terms $\text{sgn}_{1,2}^{i,j} k_{1,2}^{i,j}$ and $\text{sgn}_{2,1}^{i,j} k_{2,1}^{i,j}$ need to be modified as $-\text{sgn}_{1,2}^{i,j} k_{1,2}^{i,j}$ and $-\text{sgn}_{2,1}^{i,j} k_{2,1}^{i,j}$. 
But, in this case, the Laplacian matrix $\mathbb{L}$ may have negative eigenvalues; thus, the stability or convergence may not be ensured any more. Thus, in this paper, we focus on only the cooperative opinion dynamics. For a matrix $A$, we use $\mathcal{N}(A)$ and $\mathcal{R}(A)$ to denote the nullspace and the range of $A$, respectively.


\section{Analysis} \label{section_analysis}
It will be shown in this section that the positive definiteness of the Laplacian matrix in \eqref{symbol_Laplacian} is closely related with the positive definiteness of the coupling matrix $K^{i,j}$. When the coupling matrices are positive semidefinite, we provide exact conditions for complete opinion consensus and cluster consensus. However, when $\mathbb{L}(x)$ is indefinite, since $\mathbb{L}(x)$ is time-varying, the system \eqref{eq_laplacian_dynamics} can still be stable and a consensus might be reached. Let us first focus on the case of positive semidefinite Laplacian, and then, we consider general cases that include indefinite Laplacian matrices.

\subsection{Case of Positive Semidefinite Laplacian} \label{sec3_sub1_psd}
It is not straightforward to verify whether the Laplacian $\mathbb{L}(x)$ in \eqref{symbol_Laplacian} is positive semidefinite or not since it is a block matrix. In $\mathbb{L}(x)$, the element matrices could be positive definite, positive semidefinite, negative definite, negative semidefinite, or indefinite. Thus, an analysis for the dynamics \eqref{eq_laplacian_dynamics} would be more difficult than the traditional scalar-based consensus. For the analysis, let us define the incidence matrix $\mathbb{H}= [h_{ij}] \in \Bbb{R}^{m \times n}$ for the interaction graph $\mathcal{G}=(\mathcal{V}, \mathcal{E})$ as:
\begin{align}
h_{ki} = \begin{cases}
       -1  & ~ \text{if vertex $i$ is the tail of the $k$-th edge}     \\
       1  & ~ \text{if vertex $i$ is the head of the $k$-th edge}   \\
       0  & ~\text{otherwise}
       \end{cases}
\end{align}
where the direction of the edge $k$ is arbitrary. Let us also define the incidence matrix in $d$-dimensional space as $\bar{\mathbb{H}} = \mathbb{H} \otimes \mathbb{I}_d$ and write the weighting matrix for the $k$-th edge as $A^{k\text{-th}} \in \Bbb{R}^{d \times d}$. Let us also write the coupling matrix $K^{i,j}$ corresponding to $A^{k\text{-th}}$ as $K^{k\text{-th}}$. 
As aforementioned, if there is no direct coupling between the same topics of two neighboring agents, the coupling matrix $K^{i,j}$ can be considered as a constant adjacency matrix for the coupling graph $\mathcal{G}_{i,j}$. The block diagonal matrix composed of $A^{k\text{-th}}, ~k=1, \ldots, m$ is denoted as $\text{blkdg}(A^{k\text{-th}})$ and the block  diagonal matrix composed of $K^{k\text{-th}}, ~k=1, \ldots, m$ is denoted as $\text{blkdg}(K^{k\text{-th}})$.
\begin{lemma} \label{lemma_psd}
For the inverse proportional coupling, the Laplacian $\mathbb{L}(x)$ is positive semidefinite if and only if  $\text{blkdg}(K^{k\text{-th}})$ is positive semidefinite. 
\end{lemma}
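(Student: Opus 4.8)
The plan is to route everything through the incidence-matrix factorization of the matrix-weighted Laplacian and thereby reduce the semidefiniteness question to that of a single block-diagonal edge matrix. Since the inverse-proportional model \eqref{inverse_proportional_diagonal_dynamics}--\eqref{inverse_proportional_coupling_dynamics} yields $(A^{i,j})^{\tp}=A^{i,j}=A^{j,i}$, the Laplacian in \eqref{symbol_Laplacian} factors as $\mathbb{L}(x)=\bar{\mathbb{H}}^{\tp}\,\text{blkdg}(A^{k\text{-th}})\,\bar{\mathbb{H}}$, which I would verify by matching the diagonal block $\sum_{k\ni i}A^{k\text{-th}}$ and the off-diagonal block $-A^{i,j}$ against \eqref{symbol_Laplacian}. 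It then follows that $y^{\tp}\mathbb{L}(x)y=(\bar{\mathbb{H}}y)^{\tp}\,\text{blkdg}(A^{k\text{-th}})\,(\bar{\mathbb{H}}y)$ for every $y\in\mathbb{R}^{nd}$.

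The second step is to tie $\text{blkdg}(A^{k\text{-th}})$ to $\text{blkdg}(K^{k\text{-th}})$. For the edge $k=(i,j)$, set $S^{k\text{-th}}=\text{diag}\left(\text{sign}(x_{j,1}-x_{i,1}),\ldots,\text{sign}(x_{j,d}-x_{i,d})\right)$. Then \eqref{inverse_proportional_diagonal_dynamics}--\eqref{inverse_proportional_coupling_dynamics} say precisely that $A^{k\text{-th}}=S^{k\text{-th}}K^{k\text{-th}}S^{k\text{-th}}$: the diagonal entries are untouched because the squared sign equals $1$, and the $(p,q)$ entry picks up the factor $\text{sign}(x_{j,p}-x_{i,p})\,\text{sign}(x_{j,q}-x_{i,q})$. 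Stacking over all edges gives $\text{blkdg}(A^{k\text{-th}})=\mathcal{S}\,\text{blkdg}(K^{k\text{-th}})\,\mathcal{S}$ with $\mathcal{S}=\text{blkdg}(S^{k\text{-th}})$ a symmetric orthogonal (signature) matrix. Hence $\text{blkdg}(A^{k\text{-th}})$ and $\text{blkdg}(K^{k\text{-th}})$ are congruent, so they share the same inertia; in particular one is positive semidefinite if and only if the other is.

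The \emph{sufficiency} direction is then immediate: if $\text{blkdg}(K^{k\text{-th}})$ is positive semidefinite then so is $\text{blkdg}(A^{k\text{-th}})$, hence $y^{\tp}\mathbb{L}(x)y\ge 0$ for all $y$. The \emph{necessity} direction is the delicate part and is where I expect the main obstacle. Here one must run the implication backwards: deduce from $\mathbb{L}(x)\succeq 0$ that $\text{blkdg}(A^{k\text{-th}})\succeq 0$, equivalently that each $K^{k\text{-th}}\succeq 0$. The difficulty is that $\bar{\mathbb{H}}y$ ranges only over the column space of $\bar{\mathbb{H}}$, of dimension at most $(n-1)d$ when $\mathcal{G}$ is connected, so positive semidefiniteness of $\mathbb{L}(x)$ a priori tests $\text{blkdg}(A^{k\text{-th}})$ only on a proper subspace. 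The approach I would try is to construct, for each edge $k_{0}=(i,j)$, a test vector $y$ whose image $\bar{\mathbb{H}}y$ is concentrated on the $k_{0}$-block; this works cleanly when $k_{0}$ is a cut edge --- e.g.\ taking $y$ supported on a leaf endpoint gives $y^{\tp}\mathbb{L}(x)y=z^{\tp}A^{k_{0}}z\ge 0$ for arbitrary $z\in\mathbb{R}^{d}$ --- and in particular settles every edge of a tree by iterated leaf-peeling. For an edge lying on a cycle no single $y$ isolates $A^{k_{0}}$, and the argument would then have to exploit additionally that the signs $\text{sign}(x_{j,p}-x_{i,p})$ arise from genuine node values and hence cannot be ``aligned'' all the way around a cycle, together with the strict positivity $k_{p,p}^{i,j}>0$ of the diagonal gains. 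Closing this gap for an arbitrary connected $\mathcal{G}$ --- or establishing that an additional structural hypothesis (such as $\mathcal{G}$ being acyclic or the coupling topologies being homogeneous) is needed for necessity in full generality --- is the step I would scrutinize most carefully.
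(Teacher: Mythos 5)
Your factorization is precisely the paper's: it writes $\mathbb{L}=\bar{\mathbb{H}}^{T}\,\text{blkdg}(A^{k\text{-th}})\,\bar{\mathbb{H}}$, establishes $A^{k\text{-th}}=\text{diag}(S^{i,j})\,K^{k\text{-th}}\,\text{diag}(S^{i,j})$ exactly as you do, and concludes $\mathbb{L}=\bar{\mathbb{H}}^{T}\,\text{blkdg}(\text{diag}(S^{i,j}))\,\text{blkdg}(K^{k\text{-th}})\,\text{blkdg}(\text{diag}(S^{i,j}))\,\bar{\mathbb{H}}$; your sufficiency argument coincides with the paper's and is complete.

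The difference lies entirely in the necessity direction, and your instinct to scrutinize it is correct: the paper disposes of it in one sentence, asserting that it is ``obvious'' that $\mathbb{L}\succeq 0$ if and only if $\text{blkdg}(K^{k\text{-th}})$ admits a Gram factorization $\bar{\mathbb{K}}^{T}\bar{\mathbb{K}}$. That step silently uses the equivalence $M^{T}BM\succeq 0\iff B\succeq 0$ with $M=\text{blkdg}(\text{diag}(S^{i,j}))\,\bar{\mathbb{H}}$ and $B=\text{blkdg}(K^{k\text{-th}})$, which is valid only when the range of $M$ is all of $\Bbb{R}^{md}$. Since the sign matrices are invertible and $\rank(\bar{\mathbb{H}})=(n-1)d$ for a connected graph, this holds exactly when $m=n-1$, i.e.\ when $\mathcal{G}$ is a tree (your leaf-peeling is an explicit version of this surjectivity); for $m>n-1$ the quadratic form only tests $B$ on a proper subspace, which is the cycle obstruction you identify. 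The paper supplies no argument for this case, so your write-up is at least as complete as the paper's own proof. To match the lemma as stated one must either close the cycle case (as you suggest, using that the edge differences around a cycle sum to zero together with the sign structure) or read the lemma as the one-directional claim that positive semidefiniteness of $\text{blkdg}(K^{k\text{-th}})$ suffices --- which is the only direction the downstream results (Theorem~1, Lemma~3, Theorem~4) actually invoke.
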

\begin{proof}
Due to the same reason as Lemma 1 of \cite{Trinh2017arvix}, we can write $\mathbb{L} = \bar{\mathbb{H}}^T \text{blkdg}(A^{k\text{-th}}) \bar{\mathbb{H}}$. It is shown that the weighting matrix $A^{k\text{-th}}$ can be written as
\begin{align}
A^{k\text{-th}} = \text{diag}(S^{i,j}) K^{k\text{-th}} \text{diag}(S^{i,j}) \label{weighting_matrix_decompose}
\end{align}
where the edge $(i,j)$ is the $k$-th edge and $\text{diag}(S^{i,j})$ is given as
\begin{align}
\text{diag}(S^{i,j}) &= \text{diag}(\text{sign}(x_{j,k} - x_{i,k})) \nonumber\\
&= \left[\begin{array}{ccc}
\text{sign}(x_{j,1} - x_{i,1})   & \cdots   & 0 \\
\vdots   & \ddots  & \vdots  \\
0  &     \cdots & \text{sign}(x_{j,d} - x_{i,d})     \\
\end{array} \right] \label{weighting_matrix_S}
\end{align}
Hence, the Laplacian matrix can be written as
\begin{align}
\mathbb{L} = \bar{\mathbb{H}}^T \text{blkdg}( \text{diag}(S^{i,j})  )  \text{blkdg}(K^{k\text{-th}})  \text{blkdg}( \text{diag}(S^{i,j})  ) \bar{\mathbb{H}} 
\end{align}
Therefore, it is obvious that the Laplacian matrix $\mathbb{L}$ is positive semidefinite if and only if the matrix $\text{blkdg}(K^{k\text{-th}})$ can be decomposed as $\text{blkdg}(K^{k\text{-th}}) = \bar{\mathbb{K}}^T \bar{\mathbb{K}}$ with a certain matrix $\bar{\mathbb{K}}$. It means that the Laplacian matrix $\mathbb{L}$ is positive semidefinite if and only if $\text{blkdg}(K^{k\text{-th}})$ is positive semidefinite. 
\end{proof}

\begin{theorem}\label{thm_psd}
The Laplacian $\mathbb{L}(x)$ is positive semidefinite if and only if the coupling matrices $K^{i,j}$ are positive semidefinite. 
\end{theorem}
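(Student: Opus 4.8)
The plan is to obtain Theorem~\ref{thm_psd} as an immediate corollary of Lemma~\ref{lemma_psd}. That lemma already reduces positive semidefiniteness of the full Laplacian $\mathbb{L}(x)$ to positive semidefiniteness of the block-diagonal matrix $\text{blkdg}(K^{k\text{-th}})$, so the only remaining content is the elementary fact that a symmetric block-diagonal matrix is positive semidefinite precisely when each of its diagonal blocks is, together with the bookkeeping observation that the diagonal blocks of $\text{blkdg}(K^{k\text{-th}})$ are exactly the coupling matrices $K^{i,j}$ as $(i,j)$ ranges over $\mathcal{E}$.

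For the block-diagonal step I would argue directly: writing any vector $v\in\Bbb{R}^{md}$ as $v=(v_1^\tp,\ldots,v_m^\tp)^\tp$ partitioned conformably with the blocks $K^{1\text{-th}},\ldots,K^{m\text{-th}}\in\Bbb{R}^{d\times d}$, one has
\[
v^\tp\,\text{blkdg}(K^{k\text{-th}})\,v=\sum_{k=1}^{m}v_k^\tp K^{k\text{-th}}v_k .
\]
If every $K^{k\text{-th}}\succeq 0$ then each summand is nonnegative, so $\text{blkdg}(K^{k\text{-th}})\succeq 0$. Conversely, if $\text{blkdg}(K^{k\text{-th}})\succeq 0$, then for a fixed $k$ choosing $v$ supported only on the $k$-th block leaves $v_k^\tp K^{k\text{-th}}v_k\ge 0$ for all $v_k\in\Bbb{R}^d$, hence $K^{k\text{-th}}\succeq 0$. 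Since each $K^{k\text{-th}}$ is by definition the constant coupling matrix $K^{i,j}$ of the $k$-th edge $(i,j)$, and $k$ enumerates all $m$ edges of $\mathcal{E}$, the statement ``$\text{blkdg}(K^{k\text{-th}})\succeq 0$'' is literally the statement ``$K^{i,j}\succeq 0$ for every $(i,j)\in\mathcal{E}$.'' Combining this equivalence with Lemma~\ref{lemma_psd} yields the theorem.

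I do not expect a genuine obstacle here, since essentially all of the work is already packaged in Lemma~\ref{lemma_psd} via the factorization $\mathbb{L}=\bar{\mathbb{H}}^\tp\,\text{blkdg}(\text{diag}(S^{i,j}))\,\text{blkdg}(K^{k\text{-th}})\,\text{blkdg}(\text{diag}(S^{i,j}))\,\bar{\mathbb{H}}$ and the sign-diagonal conjugation being congruence-preserving. The only points needing a little care are (i) making explicit that the edge-indexed family $\{K^{k\text{-th}}\}_{k=1}^{m}$ coincides with $\{K^{i,j}\}_{(i,j)\in\mathcal{E}}$, so that ``all blocks PSD'' and ``all coupling matrices PSD'' are the same condition, and (ii) if one wishes to phrase positive semidefiniteness through symmetry/eigenvalues, noting that each $K^{i,j}$ is symmetric, which is guaranteed by the modeling relations $k_{p,q}^{i,j}=k_{q,p}^{i,j}$ recorded with \eqref{inverse_proportional_coupling_dynamics}.
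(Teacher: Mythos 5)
Your proposal matches the paper's route exactly: the paper proves Theorem~\ref{thm_psd} by declaring it immediate from Lemma~\ref{lemma_psd}, and your argument simply fills in the elementary block-diagonal equivalence that the paper leaves implicit. Correct and essentially identical in approach.
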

\begin{proof}
The proof is immediate from \textit{Lemma~\ref{lemma_psd}}.
\end{proof}

If is well-known that the adjacency matrix of a complete graph with $d$ nodes has eigenvalues $d-1$ with multiplicity $1$ and $-1$ with multiplicity $d-1$. Then, with this fact, we can obtain the following result.
\begin{theorem} Let us suppose that there is no direct coupling between the same topics of two neighboring agents; but a topic is coupled to all other topics. Then, under the condition that all $\text{diag}(S^{i,j})$ are not equal to zero (i.e., there exists at least one topic $p$ such that $x_{j,p} \neq x_{i,p}$), the Laplacian $\mathbb{L}(x)$ has negative eigenvalues. 
\end{theorem}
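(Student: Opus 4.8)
The plan is to observe that every coupling matrix is indefinite and then let Lemma~\ref{lemma_psd} (equivalently Theorem~\ref{thm_psd}) transport this to $\mathbb{L}(x)$. Under the stated hypotheses, for each edge the matrix $K^{k\text{-th}}$ has all diagonal entries zero (no direct coupling between the same topics, so $k_{p,p}^{i,j}=0$) and all off-diagonal entries strictly positive (each topic is coupled to every other topic, so $k_{p,q}^{i,j}>0$ for $p\neq q$); by the coupling-symmetry assumption together with $k_{p,q}^{i,j}=k_{q,p}^{i,j}$ it is moreover symmetric. Hence $K^{k\text{-th}}$ is a nonzero real symmetric matrix with $\trace K^{k\text{-th}}=0$, so its eigenvalues sum to zero and are not all zero, forcing at least one of them to be strictly negative; in the uniform-weight case $K^{k\text{-th}}$ is exactly the adjacency matrix of the $d$-vertex complete graph, whose spectrum $\{d-1,\,\underbrace{-1,\dots,-1}_{d-1}\}$ is the ``well-known fact'' quoted just above the theorem. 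Consequently $\text{blkdg}(K^{k\text{-th}})$ is not positive semidefinite.

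The core step is then Lemma~\ref{lemma_psd}: for the inverse-proportional couplings, $\mathbb{L}(x)$ is positive semidefinite if and only if $\text{blkdg}(K^{k\text{-th}})$ is. By the previous paragraph the latter fails, so $\mathbb{L}(x)$ is not positive semidefinite; and since $\mathbb{L}(x)$ is symmetric here, ``not positive semidefinite'' means precisely that $\mathbb{L}(x)$ has a strictly negative eigenvalue, which is the assertion. The hypothesis that all $\text{diag}(S^{i,j})\neq 0$ --- equivalently, that no two adjacent agents are in complete agreement at $x$ --- is exactly what makes this step legitimate at the state $x$: it guarantees that every edge is ``active'', so that each block $A^{k\text{-th}}=\text{diag}(S^{i,j})\,K^{k\text{-th}}\,\text{diag}(S^{i,j})$ of \eqref{weighting_matrix_decompose} is a nondegenerate congruent image of the indefinite $K^{k\text{-th}}$, and the ``only if'' direction of Lemma~\ref{lemma_psd} genuinely applies; without it (e.g.\ at a consensus state) one could have $\mathbb{L}(x)=0$ and nothing to prove.

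A more hands-on alternative is to exhibit a destabilizing direction directly from $\zeta^{\tp}\mathbb{L}(x)\zeta=\sum_{(i,j)\in\mathcal{E}}(\zeta_i-\zeta_j)^{\tp}A^{i,j}(\zeta_i-\zeta_j)$: pick an edge $(i_0,j_0)$, take a negative eigenvector $v$ of $K^{i_0,j_0}$, and choose $\zeta$ with $\zeta_{i_0}-\zeta_{j_0}=\text{diag}(S^{i_0,j_0})v$ and $\zeta_i-\zeta_j=0$ on all other edges, so that $\zeta^{\tp}\mathbb{L}(x)\zeta=v^{\tp}K^{i_0,j_0}v<0$. Such a $\zeta$ exists whenever $(i_0,j_0)$ is a cut edge (set $\zeta$ constant on each side of the cut), so the only real obstacle --- and the main point of the proof --- is a $2$-edge-connected interaction graph, where one must instead control how a negative direction of $\text{blkdg}(K^{k\text{-th}})$ survives multiplication by $\text{blkdg}(\text{diag}(S^{i,j}))\,\bar{\mathbb{H}}$, i.e.\ check it is neither annihilated by $\bar{\mathbb{H}}$ nor pushed out of $\mathcal{R}(\bar{\mathbb{H}}^{\tp})$. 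That bookkeeping is precisely what Lemma~\ref{lemma_psd} already carries out via the factorization $\mathbb{L}=\bar{\mathbb{H}}^{\tp}\,\text{blkdg}(\text{diag}(S^{i,j}))\,\text{blkdg}(K^{k\text{-th}})\,\text{blkdg}(\text{diag}(S^{i,j}))\,\bar{\mathbb{H}}$, so deferring to it is the cleanest route and the one I would take.
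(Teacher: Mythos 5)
Your proposal is correct and follows essentially the same route as the paper: identify each per-edge coupling matrix as a (weighted) complete-graph adjacency matrix with negative eigenvalues, and transfer this indefiniteness to $\mathbb{L}(x)$ through the factorization $\mathbb{L}=\bar{\mathbb{H}}^{T}\,\text{blkdg}(\text{diag}(S^{i,j}))\,\text{blkdg}(K^{k\text{-th}})\,\text{blkdg}(\text{diag}(S^{i,j}))\,\bar{\mathbb{H}}$ --- the paper asserts the transfer directly from the multiplicity $m(d-1)$ of the eigenvalue $-1$, while you route it through the ``only if'' direction of Lemma~\ref{lemma_psd}, which rests on the very same factorization. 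Your trace argument (which covers non-uniform positive weights) and your explicit acknowledgement that the transfer of negative directions past $\text{blkdg}(\text{diag}(S^{i,j}))\,\bar{\mathbb{H}}$ is the genuinely delicate step are, if anything, more careful than the paper's one-line justification.
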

\begin{proof}
It is clear that the matrix $K^{k\text{-th}}$ can be considered as an adjacency matrix characterizing the topic couplings of the $k$-th edge. Let us denote this matrix as $K_{-}^{k\text{-th}}$. Then, the matrix $K_{-}^{k\text{-th}}$ has an eigenvalue $d-1$ with multiplicity $1$ and the eigenvalue $-1$ with multiplicity $d-1$. Thus, the matrix $\mathbb{L} = \bar{\mathbb{H}}^T \text{blkdg}( \text{diag}(S^{i,j})  )  \text{blkdg}(K_{-}^{k\text{-th}})  \text{blkdg}(\text{diag}(S^{i,j})) \bar{\mathbb{H}}$ has eigenvalues located in the open left half plane because $\text{blkdg}(K_{-}^{k\text{-th}})$ has eigenvalue $-1$ with multiplicity of $m (d-1)$ where $m =\vert \mathcal{E} \vert$. 
\end{proof}
The opposite circumstance occurs when all the topics are coupled, including the same topics, which is summarized  in the next result.
\begin{corollary}
Suppose that all the topics are coupled for all edges. Then, the Laplacian $\mathbb{L}(x)$ is positive semidefinite.
\end{corollary}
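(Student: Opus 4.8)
The plan is to reduce the claim, via the factorization already established in \textit{Lemma~\ref{lemma_psd}} (and summarized in \textit{Theorem~\ref{thm_psd}}), to the statement that each constant coupling matrix $K^{i,j}$ is positive semidefinite, and then to exhibit an explicit positive semidefinite decomposition of $K^{i,j}$ that follows from the ``all topics coupled'' hypothesis.

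Step one: invoke \textit{Theorem~\ref{thm_psd}}. Its proof rests on the identity $\mathbb{L} = \bar{\mathbb{H}}^T\,\text{blkdg}(\text{diag}(S^{i,j}))\,\text{blkdg}(K^{k\text{-th}})\,\text{blkdg}(\text{diag}(S^{i,j}))\,\bar{\mathbb{H}}$, in which $\text{blkdg}(\text{diag}(S^{i,j}))$ is a diagonal $\pm1$ matrix (hence orthogonal), so $\mathbb{L}(x)\succeq 0$ iff $\text{blkdg}(K^{k\text{-th}})\succeq 0$; and since a block-diagonal symmetric matrix is positive semidefinite exactly when every block is, it suffices to prove $K^{i,j}\succeq 0$ for every edge $(i,j)\in\mathcal{E}$.

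Step two: fix an edge $(i,j)$ and use the observation recorded in Section~\ref{section_model} that when every topic of agent $i$ is coupled to every topic of agent $j$ \emph{including} the same-topic (direct) couplings, the matrix $K^{i,j}-\mathbb{I}_d$ is the adjacency matrix of the complete coupling graph ignoring self-loops. Concretely this means $K^{i,j}$ splits as $K^{i,j}=\mathbf{1}_d\mathbf{1}_d^T + D^{i,j}$, where $\mathbf{1}_d\mathbf{1}_d^T$ is the all-ones matrix carrying the uniform off-diagonal couplings and $D^{i,j}$ is the diagonal matrix collecting the direct self-coupling weights, which are nonnegative. (More generally, if the common off-diagonal coupling weight on this edge is a value $c>0$, write $K^{i,j}=c\,\mathbf{1}_d\mathbf{1}_d^T + D^{i,j}$ with $D^{i,j}=\text{diag}(k^{i,j}_{1,1}-c,\ldots,k^{i,j}_{d,d}-c)$, which is again nonnegative because every topic is additionally self-coupled.) Since $\mathbf{1}_d\mathbf{1}_d^T\succeq 0$ and any nonnegative diagonal matrix is positive semidefinite, $K^{i,j}$ is a sum of positive semidefinite matrices and hence $K^{i,j}\succeq 0$.

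Combining the two steps yields $\text{blkdg}(K^{k\text{-th}})\succeq 0$ and therefore $\mathbb{L}(x)\succeq 0$, which proves the corollary. The only delicate point — the main obstacle — is step two: one has to read ``all topics coupled, including the same topics'' as precisely the condition that the off-diagonal couplings are uniform (so that they assemble into the rank-one all-ones block) and are dominated by the diagonal direct-coupling weights, since for arbitrary unequal positive off-diagonal entries $K^{i,j}$ need not be positive semidefinite (e.g.\ a $2\times 2$ block with large equal off-diagonal entries has a negative eigenvalue). Everything else is the routine bookkeeping of the block factorization inherited from \textit{Lemma~\ref{lemma_psd}} together with the fact that block-diagonal positive semidefiniteness is blockwise.
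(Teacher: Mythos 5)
Your proof follows essentially the same route as the paper's: reduce to positive semidefiniteness of each coupling block via the factorization of \textit{Lemma~\ref{lemma_psd}}/\textit{Theorem~\ref{thm_psd}}, then observe that the all-topics-coupled $K^{i,j}$ is (up to a nonnegative diagonal correction) the rank-one all-ones matrix $\mathbf{1}_d\mathbf{1}_d^T$, hence positive semidefinite. Your added remark that the argument implicitly requires the off-diagonal couplings to be uniform (and dominated by the diagonal ones) is a fair reading of the same assumption the paper's proof makes when it sets $K_{+}^{k\text{-th}}=K_{-}^{k\text{-th}}+\mathbb{I}_d$ with all entries equal to one.
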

\begin{proof}
In this case, the matrix $K^{k\text{-th}}$ can be considered as a rank $1$ matrix defined as $K^{k\text{-th}} \triangleq  K_{+}^{k\text{-th}} =  K_{-}^{k\text{-th}} + \mathbb{I}_d$, because the matrix $K_{+}^{k\text{-th}}$ is a matrix with all elements being equal to $1$. Thus, the eigenvalues of $K_{+}^{k\text{-th}}$ are $d$ with multiplicity $1$ and all others being equal to zero.  Therefore, the matrix $\mathbb{L}$ positive semidefinite.
\end{proof}

Now, let us suppose that the matrix $\text{blkdg}(K^{k\text{-th}})$ is positive semidefinite; then it can be written as  $\text{blkdg}(K^{k\text{-th}}) = U^T U$ for some matrix $U$. Then, by denoting $\mathbb{U} = U \text{blkdg}( \text{diag}(S^{i,j})  )    \bar{\mathbb{H}}$, we can write $\mathbb{L}(x) = \mathbb{U}^T \mathbb{U}$. It is clear that $\text{nullspace}(\bar{\mathbb{H}}) \subseteq \text{nullspace}(\mathbb{L}) = \text{nullspace}(\mathbb{U})$, because $\text{blkdg}(K^{k\text{-th}})$ is positive semidefinite. Noticing that the null space of incidence matrix is $\mathcal{N}(\bar{\mathbb{H}}) = \mathcal{R}( \bf{1}_n \otimes \mathbb{I}_d) \triangleq \mathcal{R}$, we can see that the set $\mathcal{R}$ is always a subspace of $\mathcal{N}(\mathbb{L})$. To find the null space of $\mathbb{L}$, the following lemma will be employed. 
\begin{lemma} \cite{horn1990matrix} \label{lemma_lemma_psd_equal}
When a matrix $A$ is positive semidefinite, for any vector $x$, it holds that $A x = 0$ if and only if $x^T A x = 0$.
\end{lemma}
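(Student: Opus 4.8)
The plan is to prove the two implications separately; the forward direction is immediate and requires no hypothesis on $A$, while the converse is where positive semidefiniteness enters through a factorization of $A$.

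First I would dispose of the easy direction: if $Ax=0$, then $x^T A x = x^T(Ax) = x^T \mathbf{0} = 0$, so nothing beyond the definition of matrix multiplication is needed here. For the converse, suppose $x^T A x = 0$. Since $A$ is (symmetric) positive semidefinite, it admits a factorization $A = B^T B$ for some real matrix $B$ — one may take $B = A^{1/2}$, the unique positive semidefinite square root of $A$, or any Cholesky-type factor. Substituting,
\begin{align}
0 = x^T A x = x^T B^T B x = \Vert B x \Vert^2 ,
\end{align}
which forces $Bx = 0$, and hence $Ax = B^T B x = B^T(Bx) = \mathbf{0}$. This closes the equivalence.

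An equivalent route, if one prefers to avoid invoking the matrix square root, is the spectral decomposition: write $A = Q\Lambda Q^T$ with $Q$ orthogonal and $\Lambda = \mathrm{diag}(\lambda_1,\dots,\lambda_n)$, all $\lambda_i \ge 0$; setting $y = Q^T x$ gives $x^T A x = \sum_i \lambda_i y_i^2$, a sum of nonnegative terms. It vanishes only if $\lambda_i y_i^2 = 0$ for every $i$, hence $\lambda_i y_i = 0$ for every $i$, so $\Lambda y = \mathbf{0}$ and therefore $Ax = Q\Lambda y = \mathbf{0}$.

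There is essentially no obstacle in this argument; the only point requiring care is to justify the factorization $A = B^T B$, which is precisely the standard characterization of positive semidefinite matrices (equivalently, all eigenvalues nonnegative). Since the statement is quoted from \cite{horn1990matrix}, this brief self-contained derivation is more than adequate, and it is exactly the form in which the lemma is subsequently applied to $\mathbb{L}(x) = \mathbb{U}^T\mathbb{U}$.
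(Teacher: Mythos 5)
Your proof is correct; note that the paper itself gives no proof of this lemma at all, simply citing it from Horn and Johnson, so there is nothing to compare against. Both of your routes (the factorization $A = B^TB$ giving $x^TAx = \Vert Bx\Vert^2$, and the spectral decomposition) are the standard arguments, and the factorization version is exactly the form in which the paper later exploits the lemma via $\mathbb{L}(x) = \mathbb{U}^T\mathbb{U}$; the only point worth flagging is that the implication genuinely requires $A$ to be symmetric (as you note parenthetically), which holds for all the matrices to which the paper applies it.
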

We remark that if a matrix $A$ is indefinite, the above lemma (i.e., \textit{Lemma~\ref{lemma_lemma_psd_equal}}) does not hold. For example, let us consider the following matrix:
\begin{align}
A = \left[\begin{array}{cccc}
1 & 0 & 0 & 0 \\
0 & 1 & 0 & 0 \\
0 & 0 & -4 & 0 \\
0 & 0 & 0 & -4 \\
\end{array}\right] \label{example_indefinite_null}
\end{align} 
which is nonsingular and has $1, 1, -4, -4$ as its eigenvalues. Then, the vector $x=(1, 1, 1/2, 1/2)^T$ makes that $x^T A x =0$, while $A x \neq 0$. It may be also important to note that, since the elements of the coupling matrix $K^{i,j}$ are all non-negative, and $\text{sign}(x_{j,p}- x_{i,p})(x_{j,p}- x_{i,p}) \geq 0$, \textit{Lemma~\ref{lemma_lemma_psd_equal}} may be further generalized. However, a further generalization is not obvious. 
From \textit{Lemma~\ref{lemma_lemma_psd_equal}}, we can see that a vector $x$ in the null space of $A$ is equivalent to a vector $x$ that makes $x^T A x =0$, when the matrix $A$ is positive semidefinite. With this fact, since  $a_{p,q}^{i,j} =  k_{p,q}^{i,j} \text{sgn}_{p,q}^{i,j} = k_{p,q}^{i,j}  \text{sign}( x_{j,p} - x_{i,p}) \times  \text{sign}(x_{j,q} - x_{i,q}) = k_{p,q}^{i,j} \left[ \frac{ 2 }{ 1 + e^{-k_e (x_{j,p} - x_{i,p})}} - 1 \right] \left[ \frac{ 2 }{ 1 + e^{-k_e (x_{j,q} - x_{i,q})}} - 1 \right]$, we can write $x^T \mathbb{L} x$ as follows:
\begin{align}
x^T \mathbb{L} x &= \sum_{(i,j) \in \mathcal{E} } (x_j  - x_i)^T A^{ij} (x_j  - x_i) \nonumber\\
        &= \sum_{(i,j) \in \mathcal{E} } \sum_{p=1}^d \sum_{q=1}^d a_{p,q}^{i,j} (x_{j,p} - x_{i,p})(x_{j,q} - x_{i,q})  \nonumber \displaybreak[3] \\
        &= \sum_{(i,j) \in \mathcal{E} } \sum_{p=1}^d \sum_{q=1}^d k_{p,q}^{i,j}  \nonumber \displaybreak[3] \\
        &~~~\times \left(\frac{ 2 }{ 1 + e^{-k_e (x_{j,p} - x_{i,p})}}  -1\right) ( x_{j,p} - x_{i,p}) \nonumber \displaybreak[3] \\
        &~~~\times \left( \frac{ 2 }{ 1 + e^{-k_e (x_{j,q} - x_{i,q})}}  -1\right) ( x_{j,q} - x_{i,q} ) \nonumber \displaybreak[3] \\
&= \sum_{(i,j) \in \mathcal{E}} \bm{\sigma}(x_j  - x_i)^T K^{i,j} \bm{\sigma}(x_j  - x_i) \label{eq:xT L x_null} 
\end{align}        
where $\bm{\sigma}(x_j  - x_i)$ is defined as
\begin{align}  
  \bm{\sigma}(x_j  - x_i) \triangleq &(\sigma(x_{j,1}  - x_{i,1}), \ldots, &\sigma(x_{j,d}  - x_{i,d}))^T
  \end{align}
with $\sigma(x_{j,p}  - x_{i,p})  \triangleq sign(x_{j,p}  - x_{i,p}) (x_{j,p} - x_{i,p})= \left\vert \frac{ 2 }{ 1 +e^{-k_e (x_{j,p} - x_{i,p})}}  -1 \right\vert \vert x_{j,p}  - x_{i,p} \vert$. 
Thus, to have $\sum_{(i,j) \in \mathcal{E} } \bm{\sigma}(x_j  - x_i)^T K^{i,j} \bm{\sigma}(x_j  - x_i)=0$, based on \textit{Lemma~\ref{lemma_lemma_psd_equal}}, it is required to satisfy  $\bm{\sigma}(x_j  - x_i) \in \mathcal{N}(K^{i,j})$ for all $(i,j) \in \mathcal{E}$. Therefore, summarizing this discussion, we can obtain the following lemma.
\begin{lemma} \label{lemma_nullspace_L}
Suppose that the $\text{blkdg}(K^{k\text{-th}})$ is positive semidefinite, which is equivalent to the positive semi-definiteness of $\mathbb{L}$. Then, the null space of $\mathbb{L}$ is given as:
\begin{align}
\mathcal{N}(\mathbb{L})=& \text{span}\{ \mathcal{R}, \{ x = (x_1^T, x_2^T, \cdots, x_n^T)^T \in \Bbb{R}^{dn} \nonumber\\
&~~~ ~|~   \bm{\sigma}(x_j  - x_i) \in \mathcal{N}(K^{i,j}),~\forall (i,j) \in \mathcal{E}   \}     \}
\end{align}
\end{lemma}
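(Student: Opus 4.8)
The plan is to prove the set equality by a short chain of equivalences that pushes a statement about the state-dependent matrix $\mathbb{L}$ down to a collection of statements about the constant coupling matrices $K^{i,j}$, invoking \textit{Lemma~\ref{lemma_lemma_psd_equal}} twice. First I would dispose of the easy inclusions. From the decomposition $\mathbb{L}(x) = \bar{\mathbb{H}}^T \text{blkdg}(A^{k\text{-th}}) \bar{\mathbb{H}}$ recorded in the proof of \textit{Lemma~\ref{lemma_psd}}, every $x \in \mathcal{R} = \mathcal{N}(\bar{\mathbb{H}})$ satisfies $\mathbb{L}(x)x = 0$; moreover such $x$ has all agent blocks equal, so $x_j - x_i = 0$ and hence $\bm{\sigma}(x_j - x_i) = 0 \in \mathcal{N}(K^{i,j})$ for every edge, trivially. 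Thus $\mathcal{R}$ sits inside both sides, and it remains to show that for an arbitrary $x$, $\mathbb{L}(x)x = 0$ if and only if $\bm{\sigma}(x_j - x_i) \in \mathcal{N}(K^{i,j})$ for all $(i,j) \in \mathcal{E}$.

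The core of the argument is this pointwise equivalence. Fixing $x$ and writing $A := \mathbb{L}(x)$, the hypothesis that $\text{blkdg}(K^{k\text{-th}})$ is positive semidefinite makes $A$ positive semidefinite for every $x$ (this is exactly \textit{Theorem~\ref{thm_psd}}, obtained from the factorization of $\mathbb{L}$ in \textit{Lemma~\ref{lemma_psd}}), so \textit{Lemma~\ref{lemma_lemma_psd_equal}} gives $\mathbb{L}(x)x = 0 \iff x^T \mathbb{L}(x) x = 0$. I would then substitute the algebraic identity \eqref{eq:xT L x_null}, $x^T \mathbb{L}(x) x = \sum_{(i,j) \in \mathcal{E}} \bm{\sigma}(x_j - x_i)^T K^{i,j} \bm{\sigma}(x_j - x_i)$. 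Since each $K^{i,j}$ is a diagonal block of the positive semidefinite matrix $\text{blkdg}(K^{k\text{-th}})$ it is itself positive semidefinite, so each summand is nonnegative and the sum vanishes iff every summand vanishes, i.e. $\bm{\sigma}(x_j - x_i)^T K^{i,j} \bm{\sigma}(x_j - x_i) = 0$ for all $(i,j)$. A second application of \textit{Lemma~\ref{lemma_lemma_psd_equal}}, now with $A = K^{i,j}$ and the vector $\bm{\sigma}(x_j - x_i)$, turns this into $K^{i,j}\bm{\sigma}(x_j - x_i) = 0$, i.e. $\bm{\sigma}(x_j - x_i) \in \mathcal{N}(K^{i,j})$. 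Reading the chain in both directions closes the proof, and combining with the first paragraph gives the stated description of $\mathcal{N}(\mathbb{L})$.

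The point requiring care — and the one I would flag explicitly — is the state-dependence of $\mathbb{L}$: "$\mathcal{N}(\mathbb{L})$" has to be read as the set of $x$ with $\mathbb{L}(x)x = 0$ (the equilibrium set of \eqref{eq_laplacian_dynamics}), not the kernel of a single frozen matrix, and one must make sure that in each use of \textit{Lemma~\ref{lemma_lemma_psd_equal}} the matrix and the vector tested against it are evaluated at the same point. Under the hypothesis $\mathbb{L}(x)$ is positive semidefinite for every $x$, so this is legitimate, but it should be said. It is also worth noting that \eqref{eq:xT L x_null} is a pointwise algebraic identity that holds whether the exact $\mathrm{sign}$ function or its sigmoid surrogate is used, so the argument does not depend on that modeling choice. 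A last, minor subtlety is that $\{x : \bm{\sigma}(x_j - x_i) \in \mathcal{N}(K^{i,j})\ \forall (i,j)\}$ need not be a linear subspace because $\bm{\sigma}$ is nonlinear; the ``span'' in the statement is then to be understood as the subspace generated by this equilibrium set together with $\mathcal{R}$, and nothing further is needed once the pointwise characterization is established.
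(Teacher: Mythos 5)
Your proposal is correct and follows essentially the same route as the paper's own (implicit) argument, which likewise combines the factorization of $\mathbb{L}$, the identity \eqref{eq:xT L x_null}, and \textit{Lemma~\ref{lemma_lemma_psd_equal}} applied both to $\mathbb{L}(x)$ and to each $K^{i,j}$. Your explicit remarks on the state-dependence of $\mathbb{L}$ and on the reading of ``span'' are careful clarifications of points the paper leaves tacit, but they do not change the approach.
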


\begin{remark}
\textit{Lemma~\ref{lemma_nullspace_L}} implies that if $\text{blkdg}(K^{k\text{-th}})$ is positive definite, then 
$\mathcal{N}$$\mathcal(\mathbb{L}) = \mathcal{R}$. Thus, a complete opinion consensus is achieved.
\end{remark}

\begin{remark} \label{remark_nonsingular_result}
In \eqref{eq:xT L x_null}, if $K^{i,j}$ is nonsingular, then it has only the trivial null space. Thus, it appears that a complete opinion consensus might be achieved. However, as discussed with the $A$ matrix in \eqref{example_indefinite_null}, the set of vectors $x$ making $x^T \mathbb{L} x =0$ is not equivalent to the set of vectors $x$ making $\mathbb{L} x =0$. 
\end{remark}

In \textit{Lemma~\ref{lemma_nullspace_L}}, there are possibly two subspaces for the null space. The subspace  $\mathcal{R}$ is the standard consensus space; but the subspace spanned by $x$ satisfying $\bm{\sigma}(x_j  - x_i) \in \mathcal{N}(K^{i,j})$ needs to be elaborated since the elements of the coupling matrix $K^{i,j}$ are zero or positive constants, and the elements of the vector $\bm{\sigma}(x_j  - x_i)$ are also positive except the zero. The following example provides some intuitions for the coupling matrix. 
\begin{example}
Let us consider that there are five topics, and the coupling matrix between agents $i$ and $j$ is given as:
\begin{align}
K^{i,j} &=\left[\begin{array}{ccccc}   
0 &  1  &  0  & 0  & 0  \\
1 &  1  &  0  & 0  & 0  \\
0 &  0  &  0  & 0  & 0  \\
0 &  0  &  0  & 0  & 0  \\
0 &  0  &  0  & 0  & 0  \\
\end{array} \right] \label{example_Kij1}
\end{align} 
which means that only the topics $1$ and $2$ are coupled. But, the above matrix is not positive semidefinite. Thus, the basic condition of \textit{Lemma~\ref{lemma_nullspace_L}} is not satisfied. Let us consider another coupling matrix as
\begin{align}
K^{i,j} &=\left[\begin{array}{ccccc}   
1 &  1  &  0  & 0  & 0  \\
1 &  1  &  0  & 0  & 0  \\
0 &  0  &  0  & 0  & 0  \\
0 &  0  &  0  & 1  & 1  \\
0 &  0  &  0  & 1  & 1  \\
\end{array} \right] \label{example_Kij2}
\end{align} 
which is positive semidefinite. It follows that
\begin{align}
K^{i,j}\bm{\sigma}(x_j  - x_i)&=\left[\begin{array}{ccccc}   
1 &  1  &  0  & 0  & 0  \\
1 &  1  &  0  & 0  & 0  \\
0 &  0  &  0  & 0  & 0  \\
0 &  0  &  0  & 1  & 1  \\
0 &  0  &  0  & 1  & 1  \\
\end{array} \right]  \left(\begin{array}{c}   
\sigma( x_{j,1} - x_{i,1}) \\
\sigma( x_{j,2} - x_{i,2}) \\
\sigma( x_{j,3} - x_{i,3}) \\
\sigma( x_{j,4} - x_{i,4}) \\
\sigma( x_{j,5} - x_{i,5}) \\
\end{array} \right)\nonumber\\
&=  \left(\begin{array}{c}   
\sigma( x_{j,1} - x_{i,1}) + \sigma(x_{j,2} - x_{i,2})\\
\sigma(x_{j,1} - x_{i,1}) + \sigma( x_{j,2} - x_{i,2} ) \\
0  \\
\sigma( x_{j,4} - x_{i,4}) + \sigma(x_{j,5} - x_{i,5} )  \\
\sigma( x_{j,4} - x_{i,4}) + \sigma( x_{j,5} - x_{i,5}) \\
\end{array} \right) \nonumber
\end{align} 
Consequently, to satisfy $K^{i,j}\bm{\sigma}(x_j  - x_i)=0$, we need to have $x_{j,1} = x_{i,1}$, $x_{j,2} = x_{i,2}$, $x_{j,4} = x_{i,4}$, and $x_{j,5} = x_{i,5}$; but, $x_{j,3}$ and $x_{i,3}$ can be chosen  arbitrarily. 
\end{example}

From the above example, we can observe that the coupling matrices $K^{i,j},~ (i,j) \in \mathcal{E}$ would provide all possible consensus solutions in the topics among agents. Let us add another coupling between topics $1$ and $3$ into $K^{i,j}$ in \eqref{example_Kij2} as:
\begin{align}
K^{i,j} &=\left[\begin{array}{ccccc}   
1 &  1  &  1  & 0  & 0  \\
1 &  1  &  0  & 0  & 0  \\
1 &  0  &  0  & 0  & 0  \\
0 &  0  &  0  & 1  & 1  \\
0 &  0  &  0  & 1  & 1  \\
\end{array} \right] \label{example_Kij3}
\end{align} 
Then, against from an intuition, the matrix $K^{i,j}$ is no more positive semidefinite (actually it is indefinite since it has negative, zero, and positive eigenvalues). As there are more couplings between topics, the Laplacian matrix $\mathbb{L}$ has lost the positive semidefinite property. Thus, as far as the coupling matrix $K^{i,j}$ is positive semidefinite, the null space of $K^{i,j}$ enforces $x_j$ and $x_i$ to be synchronized. That is, in the multiplication $K^{i,j} \bm{\sigma}( x_j  - x_i)$, if the term $\sigma(x_{j,p}-x_{i,p})$ appears, then they will be synchronized; otherwise, if it does not appear, the synchronization $x_{j,p}  \rightarrow x_{i,p}$ is not enforced. Now, we can summarize the discussions as follows.
\begin{lemma} \label{lemma_neighbor_consensus} Let us assume that the Laplacian $\mathbb{L}(x)$ is positive semidefinite, and for two neighboring agents $i$ and $j$, the topics $p$ and $q$ are coupled, i.e., $K_{p,q}^{i,j} \neq 0$. Then, the opinion values $x_{i,p}$ and $x_{j,p}$ will reach a consensus and the opinion values $x_{i,q}$ and $x_{j,q}$ will also reach a consensus. If the same topic $p$ is coupled between neighboring agents, i.e., $K_{p,p}^{i,j} \neq 0$, then $x_{i,p}$ and $x_{j,p}$ will reach a consensus.
\end{lemma}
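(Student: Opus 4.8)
The plan is to combine the null-space characterization of $\mathbb{L}$ from \textit{Lemma~\ref{lemma_nullspace_L}} with a standard Lyapunov/LaSalle argument for the gradient-type flow \eqref{eq_laplacian_dynamics}. Since the Laplacian is positive semidefinite, the quadratic form $V(x) = \frac{1}{2}x^\tp x$ — or more robustly, the "disagreement" energy $V(x) = \sum_{(i,j)\in\mathcal{E}} \|x_j - x_i\|^2$ — works as a candidate, because $\dot V = -2 x^\tp \mathbb{L}(x) x \le 0$ along trajectories, using the identity \eqref{eq:xT L x_null}. The trajectory is bounded (each component stays in the convex hull of the initial data, or one can invoke boundedness of $V$), so by LaSalle's invariance principle the solution converges to the largest invariant set contained in $\{x : \dot V = 0\} = \{x : x^\tp \mathbb{L}(x) x = 0\}$. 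By \textit{Lemma~\ref{lemma_lemma_psd_equal}} applied blockwise (as in the derivation preceding \textit{Lemma~\ref{lemma_nullspace_L}}), this set is exactly $\{x : \bm{\sigma}(x_j - x_i) \in \mathcal{N}(K^{i,j}) \ \forall (i,j)\in\mathcal{E}\}$, i.e. $\mathbb{L}(x)x = 0$, so the flow stops there.

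\textbf{Key steps in order.} First I would fix an edge $(i,j)$ and a pair of coupled topics $(p,q)$ with $K^{i,j}_{p,q} \neq 0$, and examine the structure of $\mathcal{N}(K^{i,j})$. Because $K^{i,j}$ is positive semidefinite with nonnegative entries, if $K^{i,j}_{p,q}\neq 0$ then (by Assumption~1, $K^{i,j}_{q,p}\neq 0$ too) the $2\times 2$ principal submatrix indexed by $\{p,q\}$ is itself PSD with a nonzero off-diagonal entry; a PSD matrix with a nonzero off-diagonal forces both corresponding diagonal entries to be strictly positive (else a $2\times2$ PSD minor $\begin{bmatrix}0 & a\\ a & b\end{bmatrix}$ with $a\neq0$ has negative determinant). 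Next I would show that any $v \in \mathcal{N}(K^{i,j})$ must have $v_p = v_q = 0$: from $v^\tp K^{i,j} v = 0$ and PSD-ness, $K^{i,j} v = 0$, so row $p$ gives $\sum_r K^{i,j}_{p,r} v_r = 0$; pairing this with nonnegativity of entries and the sign structure of $\bm{\sigma}$ (all components weakly of one sign per coordinate — actually $\sigma(\cdot)\ge 0$ always) lets me conclude the relevant components vanish. Then, evaluating at a limit point $x^\star$ of the trajectory, $\bm{\sigma}(x_j^\star - x_i^\star) \in \mathcal{N}(K^{i,j})$ forces $\sigma(x_{j,p}^\star - x_{i,p}^\star) = 0$, and since $\sigma(s) = 0 \iff s = 0$, this yields $x_{i,p}^\star = x_{j,p}^\star$ and likewise $x_{i,q}^\star = x_{j,q}^\star$. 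The diagonal case $K^{i,j}_{p,p}\neq 0$ is even more direct: row $p$ of $K^{i,j}v = 0$ reads $\sum_r K^{i,j}_{p,r} v_r = 0$ with all terms $\ge 0$ (after substituting $v = \bm{\sigma}(x_j - x_i)\ge 0$ componentwise), so in particular $K^{i,j}_{p,p}\,\sigma(x_{j,p}-x_{i,p}) = 0$, giving $x_{i,p}=x_{j,p}$.

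\textbf{The main obstacle} I anticipate is the step asserting that $\bm{\sigma}(x_j^\star - x_i^\star) \in \mathcal{N}(K^{i,j})$ at limit points — this requires LaSalle's invariance principle for a \emph{discontinuous} (or at best $C^0$ after the sigmoid smoothing) vector field, plus the blockwise application of \textit{Lemma~\ref{lemma_lemma_psd_equal}} to go from $x^{\star\tp}\mathbb{L}(x^\star)x^\star = 0$ back to $\mathbb{L}(x^\star) x^\star = 0$ edge-by-edge. With the sigmoid smoothing the field is continuous and LaSalle applies cleanly; the nonnegativity argument for $\mathcal{N}(K^{i,j})$ is the only genuinely new algebraic ingredient, and it hinges on the observation — already flagged in the text after \textit{Lemma~\ref{lemma_lemma_psd_equal}} — that both the entries of $K^{i,j}$ and the components of $\bm{\sigma}(x_j - x_i)$ are nonnegative, so a zero sum of nonnegative terms forces each term to vanish. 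A secondary subtlety is that \textit{Lemma~\ref{lemma_neighbor_consensus}} as stated does not claim convergence of the full network to a single value (indeed partial/cluster consensus is possible), only that \emph{each coupled edge} equilibrates on its coupled topics; so I would be careful to phrase the conclusion as a property of the equilibrium set, consistent with \textit{Definition~\ref{def_com_par_opinion_con}}, rather than over-claiming complete consensus.
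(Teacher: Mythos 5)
Your proposal is correct and follows essentially the same route as the paper, which in fact states \textit{Lemma~\ref{lemma_neighbor_consensus}} only as a ``summary'' of the discussion surrounding \textit{Lemma~\ref{lemma_lemma_psd_equal}} and \textit{Lemma~\ref{lemma_nullspace_L}} (the quadratic form \eqref{eq:xT L x_null}, the reduction to $\bm{\sigma}(x_j-x_i)\in\mathcal{N}(K^{i,j})$, and the componentwise inspection of $K^{i,j}\bm{\sigma}(x_j-x_i)=0$) without a formal proof. The details you add — the $2\times 2$ principal-minor argument forcing $K^{i,j}_{p,p},K^{i,j}_{q,q}>0$ when $K^{i,j}_{p,q}\neq 0$, the zero-sum-of-nonnegative-terms step, and the explicit LaSalle invocation for the dynamic claim — are exactly the pieces the paper leaves implicit, and they are all sound.
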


Given a  coupling matrix $K^{i,j}$, let us define a consensus matrix, ${C}^{i,j} =[ c_{p,q}^{i,j}] $, between agents $i$ and $j$ as
\begin{align}
c_{p,p}^{i,j} = c_{q,q}^{i,j} = \begin{cases}
1  &  \text{if}~K_{p,q}^{i,j}=1 \\
0  &  \text{if}~K_{p,q}^{i,j}=0 \\
\end{cases}
\end{align}
Now, we define the topic consensus graph $\mathcal{G}_{p,con} = (\mathcal{T}_{p,con}, \mathcal{E}_{p,con})$ for the topic $p$ as follows:
\begin{align}
\mathcal{T}_{p,con} & \triangleq \{ x_{1,p},  x_{2,p}, \ldots, x_{n,p} \} \\
\mathcal{E}_{p,con} & \triangleq \{ (i,j) ~|~ \exists (i,j)~\text{if}~c_{p,p}^{i,j}=1; ~\nexists (i,j) \nonumber\\
             &~~~~~~ \text{otherwise if}~c_{p,p}^{i,j}=0 \} 
\end{align}
So, $x_{j,p}$ is a neighbor of   $x_{i,p}$ if and only if $(i,j) \in \mathcal{E}_{p,con}$. That is, $x_{j,p} \in \mathcal{N}_{x_{i,p}}$ if and only if $c_{p,p}^{i,j} =1$. 
With the above definition, we can make the following main theorem.
\begin{theorem}  \label{thm_iff_consensus} Let us assume that the Laplacian $\mathbb{L}(x)$ is positive semidefinite. Then, for the topic $p$, the agents in $\mathcal{T}_{p,con}$ will reach a consensus if and only if the topic consensus graph $\mathcal{G}_{p,con}$ is connected. 
\end{theorem}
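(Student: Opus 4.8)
The plan is to prove the two implications separately, using a LaSalle-type argument for sufficiency and an explicit non-consensus equilibrium for necessity.

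\emph{Sufficiency.} Assume $\mathcal{G}_{p,con}$ is connected. I would run a LaSalle argument on \eqref{eq_laplacian_dynamics} with the candidate $V(x)=\tfrac12 x^{\top}x$. Since $\mathbb{L}(x)$ is positive semidefinite (which, by Theorem~\ref{thm_psd}, is exactly what positive semidefiniteness of the $K^{i,j}$ gives), $\dot V=-x^{\top}\mathbb{L}(x)x\le 0$, so every sublevel set $\Omega=\{V\le V(x(t_0))\}$ is compact and positively invariant and the trajectory stays bounded. The crucial step is to show that $\{x:x^{\top}\mathbb{L}(x)x=0\}$ consists only of equilibria. Using the factorization $A^{k\text{-th}}=\text{diag}(S^{i,j})K^{k\text{-th}}\text{diag}(S^{i,j})$ from \eqref{weighting_matrix_decompose} and \eqref{weighting_matrix_S} together with the computation \eqref{eq:xT L x_null}, one has $x^{\top}\mathbb{L}(x)x=\sum_{(i,j)\in\mathcal{E}}\bm{\sigma}(x_j-x_i)^{\top}K^{i,j}\bm{\sigma}(x_j-x_i)$; since each $K^{i,j}\succeq 0$, Lemma~\ref{lemma_lemma_psd_equal} shows this vanishes iff $K^{i,j}\bm{\sigma}(x_j-x_i)=0$ for every edge, and since the $i$-th block of $\mathbb{L}(x)x$ equals $-\sum_{j\in\mathcal{N}_i}\text{diag}(S^{i,j})K^{i,j}\bm{\sigma}(x_j-x_i)$, the same condition forces $\mathbb{L}(x)x=0$, i.e. $\dot x=0$. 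Hence $\{\dot V=0\}$ is a set of equilibria, so it is its own largest invariant subset, and LaSalle's invariance principle gives $x(t)\to\{x\in\Omega:\mathbb{L}(x)x=0\}$.

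\emph{From the equilibrium set to topic-$p$ consensus.} For any edge $(i,j)\in\mathcal{E}_{p,con}$ we have $c_{p,p}^{i,j}=1$, i.e. $K_{p,q}^{i,j}\neq 0$ for some topic $q$; positive semidefiniteness of $K^{i,j}$ then forces $K_{p,p}^{i,j}>0$ through the $2\times2$ principal minor on $\{p,q\}$. Because every component of $\bm{\sigma}(x_j-x_i)$ is nonnegative and $\sigma(\cdot)$ vanishes only at $0$, the vanishing of the quadratic form makes its $p$-th diagonal term $K_{p,p}^{i,j}\,\sigma(x_{j,p}-x_{i,p})^2$ zero, hence $x_{i,p}=x_{j,p}$ on the limit set; consequently $x_{i,p}(t)-x_{j,p}(t)\to 0$ along every $(i,j)\in\mathcal{E}_{p,con}$, which is precisely the content of Lemma~\ref{lemma_neighbor_consensus}. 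Given any two agents $k,\ell$, connectedness of $\mathcal{G}_{p,con}$ supplies a path $k=i_0,i_1,\dots,i_s=\ell$ with $(i_r,i_{r+1})\in\mathcal{E}_{p,con}$, and telescoping $x_{k,p}-x_{\ell,p}=\sum_{r=0}^{s-1}(x_{i_r,p}-x_{i_{r+1},p})$ shows $x_{k,p}(t)-x_{\ell,p}(t)\to 0$, i.e. the agents of $\mathcal{T}_{p,con}$ reach a consensus on topic $p$. The only genuinely delicate point of the whole argument is the passage ``$x^{\top}\mathbb{L}(x)x=0\Rightarrow\mathbb{L}(x)x=0$'' for a state-dependent, a priori discontinuous $\mathbb{L}$; this is exactly where the sigmoid/sign smoothing and the sign-definiteness of the entries of both $K^{i,j}$ and $\bm{\sigma}(\cdot)$ are indispensable, after which the rest is pure graph bookkeeping.

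\emph{Necessity.} I would argue the contrapositive by exhibiting a non-consensus equilibrium. Suppose $\mathcal{G}_{p,con}$ is disconnected and partition the agent set into two nonempty pieces $\mathcal{V}_1,\mathcal{V}_2$ with no edge of $\mathcal{E}_{p,con}$ between them. Choose the initial condition with $x_{i,q}(t_0)$ independent of $i$ for every $q\neq p$, and $x_{i,p}(t_0)=0$ for $i\in\mathcal{V}_1$, $x_{i,p}(t_0)=1$ for $i\in\mathcal{V}_2$. In \eqref{eq_agent_dynamics}, for any agent $i$ and topic $r$ the terms with $q\neq p$ vanish because the agents agree on those topics, and a $q=p$ term can be nonzero only when $a_{r,p}^{i,j}\neq 0$, which requires $K_{r,p}^{i,j}\neq 0$ and hence, by symmetry of $K^{i,j}$, $c_{p,p}^{i,j}=1$, i.e. $(i,j)\in\mathcal{E}_{p,con}$; but then $j$ lies in the same piece as $i$, so $x_{j,p}(t_0)=x_{i,p}(t_0)$ and the term is zero as well. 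Thus $\dot x(t_0)=0$, the constant trajectory $x(t)\equiv x(t_0)$ is a solution, and topic $p$ stays split between the values $0$ and $1$; no consensus on topic $p$ is reached. This produces an initial condition for which topic-$p$ consensus fails whenever $\mathcal{G}_{p,con}$ is disconnected, which is the required necessity. The main obstacle overall is the sufficiency implication ``$x^\top\mathbb{L}(x)x=0\Rightarrow\mathbb{L}(x)x=0$'' noted above; necessity is elementary once the logical chain $a_{r,p}^{i,j}\neq0\Rightarrow(i,j)\in\mathcal{E}_{p,con}$ is observed.
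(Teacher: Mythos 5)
Your proof is correct, and it is worth separating the two directions when comparing it with the paper's. For sufficiency you follow the same skeleton as the paper (edge-wise agreement on coupled topics, then a path/telescoping argument over $\mathcal{G}_{p,con}$), but you supply the dynamical content that the paper delegates to the informal discussion around Lemma~\ref{lemma_neighbor_consensus}: the LaSalle argument with $V=\tfrac12\Vert x\Vert^2$, the pointwise equivalence $x^{\top}\mathbb{L}(x)x=0\Leftrightarrow\mathbb{L}(x)x=0$ via the edge decomposition \eqref{eq:xT L x_null} and Lemma~\ref{lemma_lemma_psd_equal}, and the observation that $K^{i,j}_{p,q}\neq 0$ together with positive semidefiniteness forces $K^{i,j}_{p,p}>0$ (the $2\times2$ minor step), which is what actually licenses the conclusion $x_{i,p}=x_{j,p}$ on the limit set; your remark that the nonnegativity of both the entries of $K^{i,j}$ and of $\bm{\sigma}(\cdot)$ is what makes every term of the quadratic form vanish individually is exactly the point the paper leaves implicit. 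For necessity your route is genuinely different and stronger: the paper asserts in one line that disconnected clusters ``cannot reach consensus,'' which is not obviously compatible with its own Remark~\ref{remark_not_p_connected_consensus} (cross-couplings can induce consensus on topics that are not directly coupled); your explicit construction of an initial condition that is an equilibrium of \eqref{eq_agent_dynamics} with topic $p$ split across the two components — using the chain $a^{i,j}_{r,p}\neq 0\Rightarrow K^{i,j}_{r,p}\neq 0\Rightarrow (i,j)\in\mathcal{E}_{p,con}$ — pins down the correct reading of the ``only if'' (failure for some initial condition) and closes that gap. The only caveats, which you already flag, are the regularity needed for LaSalle (resolved by the sigmoid smoothing, as in the paper) and the implicit assumption that ``will reach a consensus'' quantifies over all initial conditions.
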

\begin{proof}
If the consensus graph $\mathcal{G}_{p,con}$ is connected, it can be considered that there is at least one path between $x_{i,p}$ and $x_{j,p}$ for any pair of $i$ and $j$. Thus, the set of $p$-topic agents, i.e., $\mathcal{T}_{p,con} \triangleq \{x_{1, p}, x_{2, p}, \ldots, x_{n,p} \}$, is connected, and the $p$-topic will reach a consensus by \textit{Lemma~\ref{lemma_neighbor_consensus}}. The \textit{only if} is also direct from   \textit{Lemma~\ref{lemma_neighbor_consensus}}. That is, if the topic opinions in $\mathcal{T}_{p,con}$ are not connected, it means that there are clusters, which are not connected. Thus, the consensus of the agents in $\mathcal{T}_{p,con}$ is not possible. 
\end{proof}

\begin{example}
Let us consider Fig.~\ref{consensus_graph_ex} that illustrates the interaction topology of a network. In this case, the elements of coupling matrix are given as
$K_{1,1}^{1,2} = K_{1,1}^{3,4} = K_{1,2}^{2,3} = K_{2,1}^{2,3} = K_{2,2}^{1,2} = K_{3,3}^{1,2} = K_{2,3}^{3,4} = K_{3,2}^{3,4} =1$. 
Thus, we have $c_{1,1}^{1,2} = c_{1,1}^{3,4} = c_{1,1}^{2,3} = c_{2,2}^{2,3} = c_{2,2}^{1,2} = c_{3,3}^{1,2} = c_{2,2}^{3,4} =c_{3,3}^{3,4} = 1$. Then, from this consensus matrix,  we can obtain the consensus graphs $\mathcal{G}_{p,con} = (\mathcal{T}_{p,con}, \mathcal{E}_{p,con})$ for $p=1, 2, 3$ with the following edge sets:
\begin{align}
\mathcal{E}_{1,con} & \triangleq \{ (1,2), (3,4), (2,3) \} \nonumber\\
\mathcal{E}_{2,con} & \triangleq \{ (2,3), (1,2), (3,4) \} \nonumber\\
\mathcal{E}_{3,con} & \triangleq \{ (1,2), (3,4) \} \nonumber
\end{align}
Therefore, the consensus graphs $\mathcal{G}_{1,con}$ and  $\mathcal{G}_{2,con}$ are connected, while the consensus graph $\mathcal{G}_{3,con}$ is not connected. 
\begin{figure}
\centering
\begin{tikzpicture}[scale=0.65]
\node[place, black] (node11) at (-3.6,3) [label=left:$$] {};
\node[place, black] (node12) at (-3.6,2) [label=left:$$] {};
\node[place, black] (node13) at (-3.6,1) [label=left:$$] {};

\node[place, black] (node21) at (-1.2,3) [label=left:$$] {};
\node[place, black] (node22) at (-1.2,2) [label=left:$$] {};
\node[place, black] (node23) at (-1.2,1) [label=left:$$] {};

\node[place, black] (node31) at (1.2,3) [label=left:$$] {};
\node[place, black] (node32) at (1.2,2) [label=left:$$] {};
\node[place, black] (node33) at (1.2,1) [label=left:$$] {};

\node[place, black] (node41) at (3.6,3) [label=left:$$] {};
\node[place, black] (node42) at (3.6,2) [label=left:$$] {};
\node[place, black] (node43) at (3.6,1) [label=left:$$] {};

\draw[black, line width=.5pt] (-3.6,2) ellipse (0.6cm and 2cm) ;
\draw[black, line width=.5pt] (-1.2,2) ellipse (0.6cm and 2cm);
\draw[black, line width=.5pt] (1.2,2) ellipse (0.6cm and 2cm);
\draw[black, line width=.5pt] (3.6,2) ellipse (0.6cm and 2cm);

\node[black] at (-3.6, 4.5) {Agent 1};
\node[black] at (-1.2, 4.5) {Agent 2};
\node[black] at (1.2, 4.5) {Agent 3};
\node[black] at (3.6, 4.5) {Agent 4};

\node[black] at (-5.6, 3) {Topic 1};
\node[black] at (-5.6, 2) {Topic 2};
\node[black] at (-5.6, 1) {Topic 3};


\draw (node11) [line width=0.5pt] -- node [left] {} (node21);
\draw (node12) [line width=0.5pt] -- node [right] {} (node22);
\draw (node13) [line width=0.5pt] -- node [below] {} (node23);

\draw (node21) [line width=0.5pt] -- node [left] {} (node32);
\draw (node22) [line width=0.5pt] -- node [right] {} (node31);


\draw (node31) [line width=0.5pt] -- node [left] {} (node41);
\draw (node32) [line width=0.5pt] -- node [right] {} (node43);
\draw (node33) [line width=0.5pt] -- node [below] {} (node42);

\end{tikzpicture}
\caption{Interaction topology of a network and coupling between neighboring topics.}
\label{consensus_graph_ex}
\end{figure}
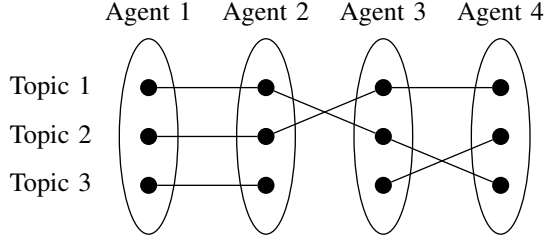
\end{example}

Now, by virtue of \textit{Theorem~\ref{thm_iff_consensus}}, we can conclude that if the Laplacian $\mathbb{L}(x)$ is positive semidefinite and all the topics are connected in the sense of \textit{Theorem~\ref{thm_iff_consensus}} (i.e., from the topic consensus graph $\mathcal{G}_{p,con} = (\mathcal{T}_{p,con}, \mathcal{E}_{p,con})$, then a complete consensus will be ensured. Otherwise, given $\mathcal{G}_{p,con}$, although the Laplacian $\mathbb{L}(x)$ is positive semidefinite, if the topic $p$ is not connected, then a partial opinion consensus will be achieved. The number of partial opinion clusters will be dependent on the number of clusters on the topic $p$. For example, in Fig.~\ref{partial_complete_consensus}, the topic $p=1$ has two clusters, the topic $p=2$ has two clusters, and the topic $p=3$ has one cluster. Let us define \textit{disconnection} as follows.
\begin{definition}
For a topic $p$, we call there is no disconnection if and only if the opinion values $x_{1,p}, x_{2,p}, \ldots, x_{n,p}$ are connected. If the opinion values are divided into $c_p$ components (there is no connection between components), then there are $c_p -1$ disconnections. 
\end{definition}
Then, for Fig.~\ref{partial_complete_consensus}, we can say that the topic $p=1$ has one disconnection (i.e., between agents $3$ and $4$), the topic $p=1$ also has one disconnection (i.e., between agents $1$ and $2$), and the topic $p=3$ does not have a disconnection. With the above definition, although it looks trivial, we can obtain the following observation. 
\begin{observation} \label{thm_psd_clusters}
Let there be $d$ topics, and each topic has $c_i, ~i=1, \ldots, d,$ clusters. Then there are $T_c = \sum_{k=1}^d (c_i -1) +1$ partial opinion clusters at maximum. 
\end{observation}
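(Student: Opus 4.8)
The plan is to count clusters by thinking of each agent's opinion vector $x_i = (x_{i,1},\dots,x_{i,d})^\tp$ as a tuple of scalar opinions, one per topic, and to note that two agents lie in the same (partial opinion) cluster in the sense of \textit{Definition~\ref{def_com_par_opinion_con}} only when $x_i = x_j$, i.e., when they agree on \emph{every} topic simultaneously. First I would fix, for each topic $p$, the partition of the agent index set $\{1,\dots,n\}$ induced by the equivalence relation ``$x_{i,p} = x_{j,p}$ at steady state''; by hypothesis this partition has exactly $c_p$ blocks. The overall cluster partition $\{\mathbb{C}_1,\dots,\mathbb{C}_q\}$ is then precisely the common refinement (the meet in the partition lattice) of the $d$ topic-wise partitions, since $x_i = x_j$ iff $i$ and $j$ share a block in each of the $d$ topic partitions.

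The key combinatorial step is to bound the number of blocks of a common refinement of partitions $P_1,\dots,P_d$ with $|P_k| = c_k$. I would prove by induction on $d$ that $|P_1 \wedge \dots \wedge P_d| \le \sum_{k=1}^d (c_k - 1) + 1$. For the base case $d=1$ this is an equality. For the inductive step, refining a partition with $r$ blocks by a partition with $c_{d}$ blocks can only split existing blocks, and each of the $c_d - 1$ ``extra'' blocks of $P_d$ (beyond the first) can increase the block count by at most $1$ per application across the whole ground set; more carefully, one shows $|P \wedge Q| \le |P| + |Q| - 1$ by a counting argument on how the blocks of $Q$ cut across the blocks of $P$ (this is the standard fact that refining by one additional 2-block partition adds at most one block, iterated). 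Combining with the inductive hypothesis $r \le \sum_{k=1}^{d-1}(c_k-1)+1$ gives $|P_1\wedge\dots\wedge P_d| \le r + c_d - 1 \le \sum_{k=1}^d (c_k-1) + 1$, which is the claimed $T_c$.

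The main obstacle — really the only non-routine point — is justifying the inequality $|P \wedge Q| \le |P| + |Q| - 1$ cleanly; the slick way is to observe that the refinement can be built by intersecting $P$ successively with the $c_Q$ blocks of $Q$, but a cleaner route is via connectivity of an auxiliary graph: form a bipartite-style ``intersection graph'' whose vertices are the blocks of $P$ together with the blocks of $Q$, joining a $P$-block to a $Q$-block whenever they share an agent; the number of blocks of $P\wedge Q$ equals the number of edges of this graph, while the graph has $|P|+|Q|$ vertices and is a forest only in the extremal case — in general one appeals to the fact that a connected component with $V$ vertices contributing to $P\wedge Q$ contributes at most $V-1$ refinement blocks (spanning-tree bound), and summing over components yields $\le |P|+|Q|-1$. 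I would also remark that equality $T_c$ is attained by a chain-like configuration (as in Fig.~\ref{partial_complete_consensus}, where the disconnections in different topics occur between disjoint pairs of agents), which is why the statement says ``at maximum.''
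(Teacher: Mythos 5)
Your reduction of the cluster count to the number of blocks of the common refinement $P_1\wedge\cdots\wedge P_d$ of the topic-wise partitions is the right formalization of \textit{Definition~\ref{def_com_par_opinion_con}}, but the key lemma you rely on, $|P\wedge Q|\le |P|+|Q|-1$, is false for general partitions. Take $n=4$ agents with $P=\{\{1,2\},\{3,4\}\}$ and $Q=\{\{1,3\},\{2,4\}\}$: then $|P|=|Q|=2$ while $P\wedge Q$ consists of four singletons, so the claimed bound $3$ is violated. The flaw is in your graph-theoretic justification: the blocks of $P\wedge Q$ are the \emph{edges} of the bipartite intersection graph, and a connected graph on $V$ vertices has \emph{at least} $V-1$ edges, not at most $V-1$. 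The spanning-tree bound therefore yields the reverse inequality $|P\wedge Q|\ge |P|+|Q|-|P\vee Q|$, and in the worst case (crossing partitions, complete bipartite intersection graph) $|P\wedge Q|$ can be as large as $|P|\cdot|Q|$. With the inductive step gone, the proof does not go through.

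For comparison, the paper's own proof is the informal version of the very same claim---it asserts that the $c_1-1$ and $c_2-1$ ``disconnections'' of two topics combine to at most $(c_1-1)+(c_2-1)$ disconnections overall---so you have in effect made its hidden assumption explicit and thereby exposed it. What makes the bound hold in the paper's examples is structure that neither argument uses: the topic partitions are not arbitrary, but are the component partitions of spanning subgraphs $\mathcal{G}_{p,con}$ of one common interaction graph $\mathcal{G}$. When $\mathcal{G}$ is a tree (as in Fig.~\ref{partial_complete_consensus}), each $\mathcal{G}_{p,con}$ is $\mathcal{G}$ with $c_p-1$ edges removed, two agents share a block of every $P_p$ exactly when their connecting path avoids the union of all removed edges, and removing at most $\sum_{p}(c_p-1)$ edges from a tree leaves at most $\sum_p(c_p-1)+1$ components, which is $T_c$. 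On a graph with a cycle the counting bound itself can fail (a $4$-cycle with topic-$1$ edges $(1,2),(3,4)$ and topic-$2$ edges $(2,3),(1,4)$ gives $c_1=c_2=2$ but four refinement classes), so any correct proof must invoke this graph structure rather than only the block counts $c_i$; the purely partition-lattice route cannot be repaired without it.
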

\begin{proof} Suppose that for the topic $d=1$, we have $c_1$ clusters. It means that there are $c_1 -1$ disconnections in the set $\mathcal{T}_{1,con} \triangleq \{x_{1, 1}, x_{2, 1}, \ldots, x_{n,1} \}$. Similarly, for the topic $d=2$ with $c_2$ clusters, there  are $c_2 -1$ disconnections in the set $\mathcal{T}_{2,con}$. Thus, by combining the topics $d=1$ and $d=2$, there could be at maximum $(c_1 -1) +(c_2 -1)$ disconnections in $\mathcal{T}_{1,con}$ and $\mathcal{T}_{2,con}$. Thus, if we consider all the topics,  there are at maximum $(c_1 -1) +(c_2 -1)$ disconnections in $\sum_{k=1}^d (c_i -1)$, which implies that there could be $T_c = \sum_{k=1}^d (c_i -1) +1$ partial opinion clusters at maximum. 
\end{proof}

The results thus far are developed for the inverse-proportional feedbacks. For the proportional feedbacks, we use \eqref{proportional_coupling_dynamics2}. The weighting matrix can be decomposed as \eqref{weighting_matrix_decompose}, with the diagonal matrix $\text{diag}(S^{i,j})$ given as:
\begin{align}
\text{diag}(S^{i,j}) = \text{diag}\Big(\frac{\text{sign}(x_{j,k} - x_{i,k})}{  c_{1} \Vert x_{j,k} - x_{i,k} \Vert + c_{0} }\Big) \label{inverse-proportional_diag_mat}
\end{align}
Also, $x^T \mathbb{L} x$ can be expressed as follows:
\begin{align}
x^T \mathbb{L} x &= \sum_{(i,j) \in \mathcal{E}}\bm{\eta}( x_j  - x_i)^T K^{i,j}\bm{\eta}(x_j  - x_i),\label{xTLx_null_proportional}
\end{align}        
where $\bm{\eta}(x_j  - x_i) \triangleq \big(\frac{{\sigma}(x_{j,1}  - x_{i,1})}{ c_{1} \Vert x_{j,1} - x_{i,1} \Vert + c_{0}  }, \frac{{\sigma}(x_{j,2}  - x_{i,2})}{c_{1} \Vert x_{j,2} - x_{i,2} \Vert + c_{0}},$  $\ldots, \frac{{\sigma}(x_{j,d}  - x_{i,d})}{c_{1} \Vert x_{j,d} - x_{i,d} \Vert + c_{0}  }\big)^T$. Thus, the null space of the Laplacian $\mathbb{L}$ in \eqref{xTLx_null_proportional} is same to the null space of $\mathbb{L}$ in \eqref{eq:xT L x_null}. Consequently, all the results in the inverse-proportional feedback couplings are exactly applied to the cases of the proportional feedback couplings.


\subsection{General Cases} \label{sec3_sub2_general}
The results in the previous section are quite clear and provide precise conditions for the characterization of opinion dynamics. However, the results are developed when the matrix $\text{blkdg}(K^{k\text{-th}})$ is positive semidefinite. As shown in \eqref{example_Kij3}, when the matrix $\text{blkdg}(K^{k\text{-th}})$ is not positive semidefinite, although it is against from intuition, there can be no theoretical guarantee for opinion consensus. For general case, we would like to directly analyze the stability of the inverse-proportional feedbacks modeled by \eqref{inverse_proportional_diagonal_dynamics} and \eqref{inverse_proportional_coupling_dynamics}. Let us take the Lyapunov candidate $V=\frac{1}{2} \Vert x \Vert^2$, which is radially unbounded and continuously differentiable, for the inverse-proportional feedbacks. The derivative of $V$ is computed as:
\begin{align} 
\dot{V}  &= -x^T \mathbb{L} x  \nonumber\\
&= -\sum_{(i,j) \in \mathcal{E} } \bm{\sigma}( x_j  - x_i)^T K^{i,j} \bm{\sigma}(x_j  - x_i) \nonumber\\
&=  -\underbrace{\sum_{(i,j) \in \mathcal{E} } \sum_{p=1}^d k_{p,p}^{i,j} ( \sigma(x_{j,p} - x_{i,p}))^2}_{ \triangleq \phi} \nonumber\\
&~~~~~ -\underbrace{\sum_{(i,j) \in \mathcal{E} } \sum_{p=1}^d \sum_{q=1,~q \neq p}^d k_{p,q}^{i,j} \sigma( x_{j,p} - x_{i,p}) \sigma( x_{j,q} - x_{i,q})}_{ \triangleq \psi }  \label{eq_v_dot_main}\\
&\leq 0 \nonumber
\end{align}
From the above inequality, it is clear that $\dot{V} =0$ if and only if $x_{j,p} = x_{i,p}$ for all topics, i.e., $\forall p \in \mathcal{T}$, the  opinion consensus is achieved, which is summarized as follows: 
\begin{theorem} \label{theorem_complete_consensus}
Let us suppose that the underlying interaction graph $\mathcal{G}$ is all-topic coupled, i.e, $\mathcal{G}_p$ is $p$-coupled for all $p \in \{1, \ldots, d\}$. Then, a complete opinion consensus is achieved. 
\end{theorem}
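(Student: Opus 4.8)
The plan is to combine the Lyapunov computation already displayed in \eqref{eq_v_dot_main} with LaSalle's invariance principle. Since $V = \tfrac{1}{2}\Vert x\Vert^2$ is radially unbounded and $\dot V = -x^\tp \mathbb{L}(x) x \le 0$, every sublevel set $\{V \le c\}$ is compact and positively invariant, so all trajectories are bounded and LaSalle applies. I would therefore focus on identifying the largest invariant set contained in $\{x : \dot V(x) = 0\}$, and showing that on this set a complete opinion consensus $x_{i,p} = x_{j,p}$ holds for all topics $p$ and all $i,j$ that are connected through $\mathcal{G}$.

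The key step is the characterization of the set $\dot V = 0$. From \eqref{eq_v_dot_main}, $\dot V = -(\phi + \psi)$, but $\phi$ and $\psi$ are not individually signed, so I cannot immediately conclude $\phi = \psi = 0$. Instead I would go back to the form $\dot V = -\sum_{(i,j)\in\mathcal{E}} \bm\sigma(x_j - x_i)^\tp K^{i,j}\bm\sigma(x_j - x_i)$ from \eqref{eq:xT L x_null}: here the issue is that $K^{i,j}$ need not be positive semidefinite in general, so an individual summand can be negative. The way around this is to use the \emph{all-topic coupled} hypothesis together with the structure of $\bm\sigma$. Recall $\sigma(x_{j,p} - x_{i,p}) = \operatorname{sign}(x_{j,p}-x_{i,p})(x_{j,p}-x_{i,p}) = |x_{j,p}-x_{i,p}|$ (up to the sigmoid-smoothing factor, which is strictly positive), so every component of $\bm\sigma$ is nonnegative and vanishes exactly when the corresponding opinion difference vanishes. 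Since all entries $k_{p,q}^{i,j}$ are nonnegative, the diagonal contribution $\phi = \sum_{(i,j)}\sum_p k_{p,p}^{i,j}\,\sigma(x_{j,p}-x_{i,p})^2 \ge 0$ and the off-diagonal contribution $\psi = \sum_{(i,j)}\sum_{p\ne q} k_{p,q}^{i,j}\,\sigma(x_{j,p}-x_{i,p})\sigma(x_{j,q}-x_{i,q}) \ge 0$ are \emph{both} nonnegative, because each factor $\sigma(\cdot)\ge 0$. Hence $\dot V = -(\phi+\psi)\le 0$ with equality iff $\phi = 0$ and $\psi = 0$; and $\phi = 0$ already forces, by all-topic couplededness, that $\sigma(x_{j,p}-x_{i,p}) = 0$, i.e.\ $x_{i,p}=x_{j,p}$, for every edge $(i,j)$ on which topic $p$ is directly coupled, i.e.\ for every edge of $\mathcal{G}_p$.

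From there I would propagate consensus across the graph: for each topic $p$, the condition $x_{i,p}=x_{j,p}$ on every edge of $\mathcal{G}_p$ means the $p$-th opinions are equal on each connected component of $\mathcal{G}_p$; and since $\mathcal{G}$ is all-topic coupled, $\mathcal{G}_p$ is $p$-coupled for every $p$, so $\{a_{p,p}^{i,j}\}$ are connected for topic $p$ — which, reading the definition, means $\mathcal{G}_p$ spans all agents and is connected for topic $p$, forcing $x_{1,p} = \cdots = x_{n,p}$. Doing this for all $p$ gives $x_i = x_j$ for all $i,j$, i.e.\ $x$ lies in the consensus space $\mathcal{R} = \mathcal{R}(\mathbf 1_n \otimes \mathbb{I}_d)$. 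Since this set is exactly where $\dot V = 0$ and it is clearly invariant (it is an equilibrium set: $\mathbb{L}(x)x = 0$ there because $\mathcal{R}\subseteq\mathcal{N}(\mathbb{L})$), LaSalle's principle yields that every trajectory converges to $\mathcal{R}$, which is precisely the claimed complete opinion consensus.

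The main obstacle — and the step I would be most careful about — is the sign analysis of $\psi$: one must be sure that writing $x^\tp \mathbb{L} x$ in terms of $\bm\sigma$ is legitimate (this uses the sigmoid/signum smoothing so that $\mathbb{L}$ is continuous and the chain computation in \eqref{eq:xT L x_null} is valid), and that each $\sigma(x_{j,p}-x_{i,p})$ is genuinely nonnegative so that the cross terms $k_{p,q}^{i,j}\sigma(\cdot)\sigma(\cdot)$ cannot be negative. This is the subtle point that makes the cooperative (all-positive $k_{p,q}^{i,j}$) case work even when the coupling matrices $K^{i,j}$ are indefinite: positivity is recovered at the level of $\bm\sigma$, not at the level of the original opinion differences $x_j - x_i$. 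A secondary point to state carefully is the invariance of the limit set for LaSalle, but since the set turns out to be the equilibrium manifold $\mathcal{R}$, this is immediate.
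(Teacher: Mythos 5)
Your proof is correct and follows essentially the same route as the paper: the same Lyapunov function $V=\tfrac{1}{2}\Vert x\Vert^2$, the same $\phi$/$\psi$ decomposition from \eqref{eq_v_dot_main}, and the same key point that $\dot V=0$ forces $\phi=0$, after which all-topic coupledness propagates consensus along each connected $\mathcal{G}_p$. The only differences are cosmetic: you make explicit the nonnegativity of both $\phi$ and $\psi$ (which the paper leaves implicit when asserting that $\dot V=0$ requires both to vanish), and you invoke LaSalle where the paper cites Barbalat's lemma.
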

\begin{proof}
To make $\dot{V}=0$, it is required to have $\phi=0$ and $\psi=0$. Since $\mathcal{G}$ is all-topic coupled, $\phi=0$ implies  $\psi=0$. But, $\psi=0$ does not imply $\phi=0$. 
Thus, it is true that $\dot{V}=0$ if $x_{j,p} = x_{i,p}$ for all topics and for all $(i,j) \in \mathcal{E}$.
Suppose that there exists an edge such that $x_{j,p} \neq x_{i,p}$ for a specific topic $p$. Then, $\dot{V}\neq 0$. Thus, $\dot{V}=0$ only if $x_{j,p} = x_{i,p}$ for all topics and for all $(i,j) \in \mathcal{E}$. Consequently, the set $\mathcal{D} = \{x: x_{j,p} = x_{i,p},~\forall i, j \in \mathcal{V},~\forall p \in \mathcal{T}\}$ is the largest invariant set. Finally, by the Barbalat's lemma (due to $\ddot{V}$ exists and is bounded), the proof is completed. 
\end{proof}

%

\begin{remark} \label{remark_all_topic}
It is remarkable that the above results are true for both the homogeneous-coupling and heterogeneous-coupling networks, as far as the interaction graph $\mathcal{G}$ is all-topic coupled. 
\end{remark}

\begin{remark} \label{remark_not_p_connected_consensus}
It is noticeable that the condition of \textit{Theorem~\ref{theorem_complete_consensus}} is only a sufficient condition for a complete consensus. Thus, we may be able to achieve a complete opinion consensus even if the network is not all-topic coupled. 
Let us suppose that two topics $\bar{p}$ and $\bar{q}$ are not $p$-coupled. For example, the two topics are not directly coupled at the edge $(\bar{i},\bar{j})$. Since the overall network is connected, there must be terms such as $k_{\bar{p}, \bar{q}}^{\bar{i}, \bar{j}} \sigma( x_{\bar{j},\bar{p}} - x_{\bar{i},\bar{p}} ) \sigma( x_{\bar{j},\bar{q}} - x_{\bar{i},\bar{q}})$  in $\psi$. Thus, to make $\dot{V}=0$, it is required to have either $\sigma( x_{\bar{j},\bar{p}} - x_{\bar{i},\bar{p}} )=0$ or $\sigma( x_{\bar{j},\bar{q}} - x_{\bar{i},\bar{q}} )=0$. Therefore, even if the two topics $\bar{p}$ and $\bar{q}$ are not $p$-coupled, the neighboring agents $\bar{i}$ and $\bar{j}$ may reach a consensus. We will illustrate this case by an example in the simulation section. \end{remark}


From the equation \eqref{eq_v_dot_main}, we can see that 
if there is no cross couplings, i.e., $\psi =0$, then it is a usual consensus protocol in different layers. 
On the other hand, if there is no direct coupling, i.e., $\phi=0$, then there is no coupling in the same topics among agents. In the case of  $\psi =0$ with $k_{p,q}^{i,j}=0$ whenever $p \neq q$, it is still true that $\dot{V} =0$ if and only if $x_{j,p} = x_{i,p}$; thus, the typical consensus is achieved. Let $\phi=0$, with $k_{p,p}^{i,j}=0$ for all $p$. 
There are some undesired equilibrium cases. For example, given a coupling graph $\mathcal{G}_{i,j}$, let there exist paths from the topic node $1$ to all other topic nodes. That is, the graph $\mathcal{G}_{i,j}$ is a star graph with root node $1$. Then,  $\dot{V}$, with $\phi=0$, can be changed as:
\begin{align}
\dot{V} &= -\sum_{(i,j) \in \mathcal{E} }  \sigma( x_{j,1} - x_{i,1} ) \left[ \sum_{q=2}^d  k_{1,q}^{i,j} \sigma( x_{j,q} - x_{i,q} ) \right] 
\end{align}
So, if $\text{sgmd}( x_{j,1} - x_{i,1}) =0$ for all $(i,j) \in  \mathcal{E}$, then we have  $\dot{V} =0$. Thus, for a star graph, if the root topic has reached a consensus, all other topics may not reach a consensus. Actually, when $\phi=0$, a complete consensus is not achieved, due to the following reason:
\begin{claim} \label{coro_phi0}
Let us suppose that, $\forall (i,j) \in \mathcal{E}$, $k_{p,p}^{i,j} =0,~ \forall p$. 
Then, $\dot{V}$ will be almost zero (for the meaning of ``almost'', see the footnote $1$) with at least one topic having $x_{j,p} \neq x_{i,p}$ if and only if the coupling graphs $\mathcal{G}_{i,j},~\forall (i,j) \in \mathcal{E}$, are complete graphs. 
\end{claim}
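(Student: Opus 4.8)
The plan is to reduce the condition ``$\dot V$ almost zero at a non-consensus state'' to a purely graph-theoretic statement about each coupling graph $\mathcal{G}_{i,j}$, and then to recognize that statement as completeness. From \eqref{eq_v_dot_main} with $\phi=0$ one has $\dot V=-\psi=-\sum_{(i,j)\in\mathcal{E}}\sum_{p}\sum_{q\neq p}k_{p,q}^{i,j}\,\sigma(x_{j,p}-x_{i,p})\,\sigma(x_{j,q}-x_{i,q})$. The structural fact I would lean on -- the one noted just after \textit{Lemma~\ref{lemma_lemma_psd_equal}} -- is that every $k_{p,q}^{i,j}\geq 0$ and every $\sigma(x_{j,p}-x_{i,p})\geq 0$, so $\psi$ is a sum of non-negative terms and $\dot V=0$ if and only if $k_{p,q}^{i,j}\,\sigma(x_{j,p}-x_{i,p})\,\sigma(x_{j,q}-x_{i,q})=0$ for every edge $(i,j)$ and every $p\neq q$; equivalently, on each edge the disagreeing-topic set $D_{i,j}=\{p:\,x_{i,p}\neq x_{j,p}\}$ is an independent set of $\mathcal{G}_{i,j}$. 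This turns the claim into the elementary graph fact that $\mathcal{G}_{i,j}$ has independence number $1$ (no two topics can simultaneously disagree without costing a strictly negative term) if and only if $\mathcal{G}_{i,j}$ is complete.

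For the ``if'' direction, assume every $\mathcal{G}_{i,j}$ is complete and fix a topic $r\in\mathcal{T}$. I would take a state $x$ with $x_{1,r},\dots,x_{n,r}$ not all equal but $x_{1,p},\dots,x_{n,p}$ all equal for each $p\neq r$ (compatible with $\mathcal{G}$ being connected). Then $D_{i,j}\subseteq\{r\}$ on every edge, so by the characterization $\dot V(x)=0$ even though $x$ is not a complete opinion consensus. Reading \eqref{inverse_proportional_coupling_dynamics} with the sigmoid in place of $\mathrm{sign}$, every entry $a_{p,q}^{i,j}$ with $p\neq q$ carries a factor $\mathrm{sgmd}(x_{j,s}-x_{i,s})$ for some $s\in\{p,q\}$ with $s\neq r$, which equals $\mathrm{sgmd}(0)=0$, while the diagonal entries vanish because $\phi=0$; hence $A^{i,j}\equiv0$ for all edges and $x$ is an equilibrium of \eqref{eq_laplacian_dynamics} -- consensus genuinely stalls with topic $r$ frozen out, which is exactly the ``$\dot V$ almost zero with at least one topic disagreeing'' situation (the word ``almost'' absorbing the fact that with the unsmoothed $\mathrm{sign}$, where $\mathrm{sign}(0)\neq0$, such $x$ still give $\dot V=0$ but are no longer exact equilibria). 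For the ``only if'' direction, if some $\mathcal{G}_{\bar i,\bar j}$ fails to be complete it has two non-adjacent topics $\bar p\neq\bar q$, so $\{\bar p,\bar q\}$ is an independent set of $\mathcal{G}_{\bar i,\bar j}$; the characterization then already permits a non-consensus state with $\dot V=0$ in which two distinct topics disagree across $(\bar i,\bar j)$, so the disagreement cannot be pinned to a single topic per edge. Hence the refined behavior asserted in the claim holds for all edges precisely when every $\mathcal{G}_{i,j}$ has independence number $1$, i.e.\ is complete.

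The main difficulty is not computational but one of formalization. Non-consensus states with $\dot V=0$ exist for \emph{any} coupling-graph structure (e.g.\ the star-graph configuration discussed just before the claim), so ``complete'' cannot be read as ``consensus may fail''; it must be the strictly stronger assertion that the sustained disagreement is confined to one topic per edge, i.e.\ each $\mathcal{G}_{i,j}$ has independence number $1$, and making the ``iff'' come out sharp requires matching exactly the footnote's reading of ``almost'' together with the $\mathrm{sign}$-versus-sigmoid distinction used above. A secondary point to check is the LaSalle/Barbalat machinery invoked elsewhere: one should confirm that $\ddot V$ stays bounded near the frozen-out states, so that the invariance argument still identifies them as the limit set, and that the construction in the ``if'' direction remains consistent with the connectivity of $\mathcal{G}$ -- which it is, by the explicit choice of $x$ given there.
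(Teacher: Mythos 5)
Your proof is correct, and it rests on the same underlying mechanism as the paper's: since every $k_{p,q}^{i,j}\geq 0$ and every $\sigma(x_{j,p}-x_{i,p})\geq 0$, the sum $\psi$ vanishes only term by term, so the set of disagreeing topics on each edge must avoid every coupling in $\mathcal{G}_{i,j}$. Where you differ is in how this is organized. The paper runs a topic-by-topic induction (``suppose topic $1$ has reached consensus, then topic $2$, \ldots'') that terminates at the last coupled pair for the \emph{if} direction, and exhibits one uncoupled pair $\{d-1,d\}$ that can jointly disagree for the \emph{only if} direction. You instead package the whole thing as: the disagreement set $D_{i,j}$ must be an independent set of $\mathcal{G}_{i,j}$, and a graph has independence number $1$ exactly when it is complete. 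This buys two things the paper's version does not make explicit: (i) it pins down the correct reading of the claim --- since non-consensus states with $\dot V=0$ exist for \emph{any} coupling topology, the ``iff'' can only be about whether sustained disagreement is confined to a single topic per edge, which is precisely independence number $1$; and (ii) it replaces the somewhat fragile induction with a one-line combinatorial equivalence. Your additional observation that, under the sigmoid smoothing, the single-disagreeing-topic state is a genuine equilibrium ($A^{i,j}\equiv 0$) while under the hard sign it is only a $\dot V=0$ state is a useful sharpening consistent with the footnote's ``almost.'' One caveat you share with the paper: the \emph{only if} construction places the two-topic disagreement on a single edge and tacitly assumes this can be done without violating the independence condition on other edges (e.g.\ when the edge lies on a cycle); neither proof addresses this, so you are at the same level of rigor as the original.
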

\begin{proof} (\textit{If}) When it is a complete graph, without loss of generality, let the first topic, $p=1$, be reached a consensus. Then, we need to have 
$\sum_{p=2}^d \sum_{q=2,~q \neq p}^d k_{p,q}^{i,j} \sigma(x_{j,p} - x_{i,p}) \sigma(x_{j,q} - x_{i,q})=0$ to make $\dot{V}=0$. Similarly, suppose that the second topic has been reached a consensus, i.e., $p=2$. Then, we need to have $\sum_{p=3}^d \sum_{q=3,~q \neq p}^d k_{p,q}^{i,j} \sigma( x_{j,p} - x_{i,p} ) \sigma( x_{j,q} - x_{i,q} )=0$. By induction, when $p=d-1$, we need to have $k_{d,d-1}^{i,j} \sigma( x_{j,d} - x_{i,d} ) \sigma( x_{j,d-1} - x_{i,d-1} )=0$. So, to make $k_{d,d-1}^{i,j} \sigma( x_{j,d} - x_{i,d} ) \sigma( x_{j,d-1} - x_{i,d-1} )=0$, either $\sigma( x_{j,d} - x_{i,d} )$ or $\sigma( x_{j,d-1} - x_{i,d-1} )$ needs to be zero.\footnote{Actually, this does not imply that only one equality holds; the two equalities may hold. Indeed, suppose that, at time $t$, all other conditions have been satisfied, and $i$ and $j$ are still updating their opinions on topics $d$ and $d-1$ using the couplings $\sigma( x_{j,d}(t)-x_{i,d}(t))$ and $\sigma( x_{j,d-1}(t)-x_{i,-1d}(t))$. If those two coupling gains are equal, then the consensus speeds of $i$ and $j$ on topics $d$ and $d-1$  are equal. Thus, the topics $d$ and $d-1$ might achieve a consensus simultaneously. Although this case might rarely happen, it may occur; that is why we call it ``\textit{almost} zero with at least one topic having $x_{j,p} \neq x_{i,p}$''.} Thus, at least one topic does not need to reach a consensus. \\
(\textit{Only if}) Without loss of generality, let us suppose that there is no couple between the topics $p=d-1$ and $p=d$; but there are couplings between all other remaining topics. Then, by following the above procedure (``\textit{if}'' procedure), when $p=d-2$, we have  $k_{d-1,d-2}^{i,j} \sigma( x_{j,d-1} - x_{i,d-1} ) \sigma( x_{j,d-2} - x_{i,d-2} ) + k_{d,d-2}^{i,j} \sigma( x_{j,d} - x_{i,d} ) \sigma( x_{j,d-2} - x_{i,d-2} ) =0$. Thus, if it is assumed that $\sigma( x_{j,d-2} - x_{i,d-2} ) =0$, then the two topics, $p=d-1$ and $p=d$, do not need to reach a consensus.  
\end{proof}

\begin{remark}
In \textit{Claim~\ref{coro_phi0}}, since $k_{p,p}^{i,j}=0$ for all $p$ and for all edges, and the coupling graphs are complete graphs, it can be classified as a homogeneous-coupling network. 
\end{remark}

The above claim implies that a complete opinion consensus for all topics is not ensured for general graphs, when $\phi=0$. 
Also under the condition of $\phi=0$, when the coupling graphs are not complete graphs, it is likely that more than one topics would not reach consensus. 
Thus, for a complete opinion consensus, it is required to have $\phi \neq 0$.  
\begin{observation}\label{theorem_phi0_partial}
Consider a homogeneous-coupling network. Let $\phi \neq 0$; but $k_{p,p}^{i,j} =0$ for some $p \in \mathcal{T}$, $\forall (i,j) \in \mathcal{E}$. 
Then, a complete opinion consensus is not ensured. 
\end{observation}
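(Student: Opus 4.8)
The statement is a non-guarantee, so the plan is to exhibit a continuum of equilibria of \eqref{eq_laplacian_dynamics} on which topic $p$ is \emph{not} at consensus; a solution started at such a point stays there, so a complete opinion consensus cannot be ensured. Let $p$ be the topic with $k_{p,p}^{i,j}=0$ for all $(i,j)\in\mathcal{E}$, and put
\begin{align}
\mathcal{D}_p \triangleq \big\{\, x=(x_1^\tp,\ldots,x_n^\tp)^\tp :\ x_{i,q}=x_{j,q}\ \ \forall\, i,j\in\mathcal{V},\ \forall\, q\in\mathcal{T}\setminus\{p\}\,\big\}.\nonumber
\end{align}
On $\mathcal{D}_p$ every topic except $p$ is already agreed, while the $n$ scalars $x_{1,p},\ldots,x_{n,p}$ are unconstrained, so $\mathcal{D}_p$ is a linear subspace of dimension $n+d-1$; for $n\ge 2$ it strictly contains the consensus subspace $\mathcal{R}=\mathcal{R}(\mathbf{1}_n\otimes\mathbb{I}_d)$ and therefore contains states that are not a complete opinion consensus.

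\noindent The central step is to verify that every point of $\mathcal{D}_p$ is an equilibrium of \eqref{eq_laplacian_dynamics}. Fix $x\in\mathcal{D}_p$, an agent $i$, and a topic $r$. In $\dot x_{i,r}=\sum_{j\in\mathcal{N}_i}\sum_{q'=1}^d a_{r,q'}^{i,j}(x_{j,q'}-x_{i,q'})$ each summand with $q'\neq p$ vanishes because $x_{j,q'}=x_{i,q'}$ on $\mathcal{D}_p$, leaving $\dot x_{i,r}=\sum_{j\in\mathcal{N}_i}a_{r,p}^{i,j}(x_{j,p}-x_{i,p})$. If $r=p$, then $a_{p,p}^{i,j}=k_{p,p}^{i,j}=0$ by hypothesis, so $\dot x_{i,p}=0$; i.e.\ topic $p$ is frozen. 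If $r\neq p$, then by \eqref{inverse_proportional_coupling_dynamics} (and by the analogous \eqref{proportional_coupling_dynamics2} in the proportional case) the coefficient $a_{r,p}^{i,j}$ carries the factor $\mathrm{sign}(x_{j,r}-x_{i,r})$, which equals $\mathrm{sign}(0)=0$ for the continuous sigmoidal surrogate for $\mathrm{sign}$ introduced after \eqref{symbol_Laplacian}; hence $a_{r,p}^{i,j}=0$ and $\dot x_{i,r}=0$. Thus $\dot x\equiv 0$ on $\mathcal{D}_p$.

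\noindent Consequently a solution of \eqref{eq_laplacian_dynamics} started at any $x(t_0)\in\mathcal{D}_p$ for which the $x_{i,p}$ are not all equal is the constant $x(\cdot)\equiv x(t_0)$, which is not a complete opinion consensus; hence complete opinion consensus is not ensured. This is consistent with the Lyapunov picture of \textit{Theorem~\ref{theorem_complete_consensus}}: on $\mathcal{D}_p$ one has $\dot V\equiv 0$, since in \eqref{eq_v_dot_main} every diagonal contribution has coefficient $k_{p,p}^{i,j}=0$ while every surviving cross contribution $k_{p,q}^{i,j}\sigma(x_{j,p}-x_{i,p})\sigma(x_{j,q}-x_{i,q})$ carries a factor $\sigma(x_{j,q}-x_{i,q})=0$ with $q\neq p$; so the largest invariant subset of $\{\dot V=0\}$ strictly contains $\mathcal{R}$ and LaSalle/Barbalat cannot deliver a complete consensus. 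Neither ``$\phi\neq 0$'' nor the homogeneous-coupling hypothesis is actually used; they merely place the statement in the regime complementary to \textit{Claim~\ref{coro_phi0}}.

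\noindent The one point requiring care — the expected main obstacle — is the value of $\mathrm{sign}(0)$. The computation above uses $\mathrm{sign}(0)=0$, valid for the sigmoidal surrogate (and for $\mathrm{sig}(\cdot)\,|\cdot|^{\alpha}$); with the discontinuous convention $\mathrm{sign}(0)=1$ the cross-couplings from topic $p$ would push the already-agreed topics off $\mathcal{D}_p$, so $\mathcal{D}_p$ is no longer forward invariant, and one must argue at the level of Filippov solutions: every point of $\mathcal{D}_p$ is still an equilibrium of the associated differential inclusion (take the $0$ element of $\mathrm{sign}(0)\in[-1,1]$), so there is a Filippov solution that remains on $\mathcal{D}_p$, again showing a complete consensus is not ensured. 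I would run the smoothed-$\mathrm{sign}$ argument as the main line and record the Filippov remark in a footnote.
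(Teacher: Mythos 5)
Your proof is correct, and it takes a genuinely different route from the paper's. The paper argues entirely at the level of the Lyapunov derivative \eqref{eq_v_dot_main}: it splits $\mathcal{T}$ into the directly coupled topics $\mathcal{T}^{\circ}$ and the uncoupled ones $\mathcal{T}^{\times}$, observes that forcing $\sigma(x_{j,p}-x_{i,p})=0$ for $p\in\mathcal{T}^{\circ}$ already annihilates every cross term involving a topic in $\mathcal{T}^{\times}$, and concludes that those topics are not forced to agree on the set $\{\dot V=0\}$ (invoking the argument of \textit{Claim~\ref{coro_phi0}} for cross terms internal to $\mathcal{T}^{\times}$). Strictly speaking this only shows that $\{\dot V=0\}$ is larger than the consensus set; the LaSalle conclusion concerns the largest \emph{invariant} subset of $\{\dot V=0\}$, and the paper does not verify that this invariant subset contains non-consensus points. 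Your construction closes exactly that gap: by exhibiting the explicit $(n+d-1)$-dimensional continuum $\mathcal{D}_p$ of non-consensus equilibria and checking invariance directly from \eqref{eq_agent_dynamics} (the $q'\neq p$ increments vanish because those topics are agreed, $\dot x_{i,p}=0$ because $a_{p,p}^{i,j}=k_{p,p}^{i,j}=0$, and $\dot x_{i,r}=0$ for $r\neq p$ because $a_{r,p}^{i,j}$ carries the factor $\mathrm{sign}(x_{j,r}-x_{i,r})$, which the sigmoidal surrogate sends to $0$), you produce actual trajectories that never reach complete consensus, which is a strictly stronger and more self-contained justification of the non-guarantee. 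Your observations that neither $\phi\neq 0$ nor homogeneity is needed, and that the argument hinges on the convention for $\mathrm{sign}(0)$ (resolved by the sigmoid the paper itself uses in \eqref{eq:xT L x_null}, or by a Filippov selection otherwise), are both accurate and identify subtleties the paper passes over silently. The only thing the paper's version nominally adds is the discussion of cross terms between two topics both lacking direct coupling, but for the stated conclusion a single disagreeing topic suffices, so your proof loses nothing.
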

\begin{proof}
Let us divide the set $\mathcal{T}$ as $\mathcal{T} = \mathcal{T}^\circ \cup \mathcal{T}^\times$ and $\mathcal{T}^\circ \cap \mathcal{T}^\times = \emptyset$, where $k_{p,p}^{i,j} \neq 0$ when $p \in \mathcal{T}^\circ$ and  $k_{p,p}^{i,j} = 0$ when $p \in \mathcal{T}^\times$.  Then, for all the topics $p \in \mathcal{T}^\circ$, we need to have $\sigma( x_{j,p}- x_{i,p} ) =0$ to make $\dot{V}=0$. Then, to make $\psi=0$, it is required to $k_{p,q}^{i,j} \sigma( x_{j,p} - x_{i,p} ) \sigma( x_{j,q} - x_{i,q} ) =0$ when $p \in \mathcal{T}^\circ$ and $q \in \mathcal{T}^\times$, or $k_{p,q}^{i,j} \sigma( x_{j,p} - x_{i,p} ) \sigma( x_{j,q} - x_{i,q} ) =0$ when $p, q \in \mathcal{T}^\times$. For the former case, since $\sigma( x_{j,p} - x_{i,p} ) =0$, it does not need to have $\sigma( x_{j,q} - x_{i,q} ) =0$. Thus, for the topics $q \in \mathcal{T}^\times$, a consensus may not be achieved. For the latter case, due to the same reason as the proof of \textit{Claim~\ref{coro_phi0}}, there will be some topics that do not reach a consensus. 
\end{proof}

 \textit{Theorem~\ref{theorem_complete_consensus}} and \textit{Observation~\ref{theorem_phi0_partial}} lead a conclusion that each  topic needs to be $p$-coupled to have a complete consensus. However, as remarked in \textit{Remark~\ref{remark_not_p_connected_consensus}}, it is not argued that the $p$-coupling for all topics, i.e., all-topic coupled, is the necessary and sufficient condition for a complete opinion consensus. 
From the equation \eqref{eq_v_dot_main}, we can infer that the interdependent couplings between topics are required to speed up the opinion consensus. So, to have an opinion consensus on a topic, the agents of the society need to discuss directly on the same topic. But, if they have some opinion couplings with other topics, the consensus of the topic may be achieved more quickly. 

Next, let us consider the proportional feedbacks modeled by \eqref{proportional_diagonal_dynamics} and  \eqref{proportional_coupling_dynamics2}.  For the proportional feedbacks, using the same Lyapunov candidate $V= \frac{1}{2}\Vert x \Vert^2$, we can obtain the derivative of $V$ as:
\begin{align} 
\dot{V} &= -\sum_{(i,j) \in \mathcal{E} } \sum_{p=1}^d \sum_{q=1}^d k_{p,q}^{i,j}  \nonumber\\
         &~~~~~ \times \frac{\sigma( x_{j,p} - x_{i,p} ) \sigma( x_{j,q} - x_{i,q} )}{{ (c_{1} \Vert x_{j,p} - x_{i,p} \Vert + c_{0}) (c_{1} \Vert x_{j,q} - x_{i,q} \Vert + c_{0})  }}  \leq 0   \label{eq_v_dot_main3}
\end{align}
Since the denominator of the right-hand side of  \eqref{eq_v_dot_main3} is always positive, the equilibrium set for $\dot{V}=0$ is decided if and only if $\sigma( x_{j,p} - x_{i,p} ) \sigma( x_{j,q} - x_{i,q} ) = 0$ for all $p, q \in \mathcal{T}$. Consequently, we have the same results as the inverse-proportional feedback couplings.


\begin{observation} \label{theorem_hetero_analysis}
Let us consider general heterogeneous-coupling network, i.e., $\mathcal{G}_{i_1,j_1}\neq \mathcal{G}_{i_2,j_2}$ for some edges $(i_1, j_1) \neq (i_2, j_2)$. If some topics are not $p$-coupled, then a complete opinion consensus is not ensured.
\end{observation}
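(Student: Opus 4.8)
The plan is to run the same Lyapunov argument used for \textit{Theorem~\ref{theorem_complete_consensus}} and \textit{Observation~\ref{theorem_phi0_partial}}, but to exploit the extra freedom of a heterogeneous coupling in order to exhibit an explicit invariant set of non-consensus equilibria. With $V=\frac{1}{2}\|x\|^{2}$ one has, in the inverse-proportional case, $\dot V=-x^{T}\mathbb{L}(x)x=-\sum_{(i,j)\in\mathcal{E}}\bm{\sigma}(x_j-x_i)^{T}K^{i,j}\bm{\sigma}(x_j-x_i)\le 0$, and the analogous identity with $\bm{\eta}$ in the proportional case; so by Barbalat's lemma every trajectory converges to the largest invariant set contained in $\{\,x:\dot V=0\,\}$. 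It is therefore enough to show this set is not contained in the complete-consensus subspace $\mathcal{R}$.

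Next I would use the hypothesis. By the definition of $p$-coupling, ``some topic $\bar{p}$ is not $\bar{p}$-coupled'' means the graph $\mathcal{G}_{\bar{p}}=(\mathcal{V}_{\bar{p}},\mathcal{E}_{\bar{p}})$ is disconnected. Pick one of its connected components, let $\mathcal{V}_1$ be its vertex set and $\mathcal{V}_2=\mathcal{V}\setminus\mathcal{V}_1$ (both nonempty). By construction no edge of $\mathcal{E}_{\bar{p}}$ joins $\mathcal{V}_1$ to $\mathcal{V}_2$, i.e. $k_{\bar{p},\bar{p}}^{i,j}=0$ for every edge $(i,j)$ with $i\in\mathcal{V}_1$, $j\in\mathcal{V}_2$. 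This is exactly where heterogeneity enters: in a homogeneous-coupling network one uncoupled edge already forces $\mathcal{E}_{\bar{p}}=\emptyset$, recovering the setting of \textit{Observation~\ref{theorem_phi0_partial}}; heterogeneity lets $\mathcal{G}_{\bar{p}}$ split into two genuinely nontrivial pieces.

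Then I would exhibit the non-consensus set. Consider
\[
\mathcal{S}\triangleq\{\, x\in\Bbb{R}^{dn}:\ x_{i,q}=x_{j,q}\ \text{for all }q\neq\bar{p}\text{ and all }i,j,\ \text{ and }x_{i,\bar{p}}=x_{j,\bar{p}}\ \text{whenever }i,j\text{ are on the same side of the partition}\,\}.
\]
For $x\in\mathcal{S}$ and any edge $(i,j)$: if $i,j$ lie on the same side then $x_j-x_i=0$; if $(i,j)$ crosses the partition then $x_j-x_i$ has a single nonzero coordinate, namely the $\bar{p}$-th. Plugging into $\dot V$, a crossing edge contributes $K^{i,j}_{\bar{p},\bar{p}}\,\sigma(x_{j,\bar{p}}-x_{i,\bar{p}})^{2}=k_{\bar{p},\bar{p}}^{i,j}\,\sigma(\cdot)^{2}=0$, so $\dot V\equiv 0$ on $\mathcal{S}$. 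Moreover, using the regularized sign function of Section~\ref{section_model}, for which $\text{sign}(0)=0$, the entry $a_{p,\bar{p}}^{i,j}$ contains the factor $\text{sign}(x_{j,p}-x_{i,p})$, which vanishes for every $p\neq\bar{p}$ because those coordinates agree, while $a_{\bar{p},\bar{p}}^{i,j}=k_{\bar{p},\bar{p}}^{i,j}=0$ on a crossing edge; hence $A^{i,j}(x)(x_j-x_i)=0$ on every edge and $\dot x\equiv 0$ on $\mathcal{S}$ (the same cancellation holds for the proportional model, since there the diagonal entries are multiplied by zero differences and the off-diagonal entries again carry a vanishing sign factor). Thus $\mathcal{S}$ is an invariant set of equilibria contained in $\{\dot V=0\}$, and it contains points with $x_{i,\bar{p}}\neq x_{j,\bar{p}}$ for $i\in\mathcal{V}_1$, $j\in\mathcal{V}_2$, a partial opinion consensus. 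Starting from any such point the society never reaches complete opinion consensus, which proves the claim; in fact one obtains a whole manifold of non-consensus limit states, one for each admissible split of topic $\bar{p}$ and each common value of the remaining topics.

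The delicate step is the equilibrium claim in the last paragraph: a priori the off-diagonal cross-couplings $k_{p,\bar{p}}^{i,j}$ on the crossing edges could keep $\dot x$ from vanishing on $\mathcal{S}$, and they are neutralized here only because the companion topic in each such product is already at consensus, which kills the corresponding sign factor under the smooth-sign regularization. With the raw convention $\text{sign}(0)=1$ the set $\mathcal{S}$ would be $\{\dot V=0\}$ but not invariant, and one would instead have to argue, in LaSalle/Filippov fashion, that the largest invariant set still escapes $\mathcal{R}$ by showing the surfaces $\{x_{j,p}=x_{i,p}\}$ are sliding; I would avoid this technicality by stating the result for the regularized model, consistently with the rest of Section~\ref{section_analysis}.
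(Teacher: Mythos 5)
Your proposal is correct and starts from the same place as the paper -- the Lyapunov function $V=\tfrac{1}{2}\Vert x\Vert^{2}$ and the decomposition of $\dot V$ into the $\phi$ and $\psi$ terms -- but it finishes the argument in a genuinely different and stronger way. The paper's proof fixes a single edge $(\bar i,\bar j)$ lacking the direct coupling on topic $\bar p$ and argues term by term that $\dot V=0$ does not \emph{force} $\sigma(x_{\bar j,\bar p}-x_{\bar i,\bar p})=0$; this shows the zero set of $\dot V$ is larger than the consensus subspace $\mathcal{R}$, but strictly speaking it leaves open whether the largest \emph{invariant} subset of $\{\dot V=0\}$ still collapses to $\mathcal{R}$. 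You close exactly that gap: by partitioning the agents along a connected component of the disconnected graph $\mathcal{G}_{\bar p}$ and exhibiting the explicit set $\mathcal{S}$, you verify not only $\dot V\equiv 0$ but $\dot x\equiv 0$ on $\mathcal{S}$ (the crossing edges contribute nothing because $k^{i,j}_{\bar p,\bar p}=0$ there and every off-diagonal entry of the $\bar p$-th column of $A^{i,j}$ carries a factor $\mathrm{sign}(x_{j,q}-x_{i,q})=0$ under the sigmoid regularization), so you obtain a genuine manifold of non-consensus equilibria and hence initial conditions from which complete consensus provably never occurs. Your remark that heterogeneity is what permits $\mathcal{G}_{\bar p}$ to be disconnected yet nonempty (in a homogeneous network one uncoupled edge forces $\mathcal{E}_{\bar p}=\emptyset$, the setting of \textit{Observation~\ref{theorem_phi0_partial}}) is a point the paper does not make, and your caveat about the $\mathrm{sign}(0)$ convention is well taken: with the raw convention $\mathrm{sign}(0)=1$ the set $\mathcal{S}$ still lies in $\{\dot V=0\}$ (since $\sigma(0)=0$ either way) but need not be invariant, which is precisely the subtlety the paper's looser phrasing glosses over. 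In short, what the paper buys is brevity; what your construction buys is an actual certificate of non-consensus rather than an inconclusiveness statement about the Lyapunov test.
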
 
\begin{proof}
Let us suppose that there is no direct coupling between agents $\bar{j}$ and $\bar{i}$, on a specific topic $\bar{p}$. Then, in $\phi$ of \eqref{eq_v_dot_main}, the term $(x_{\bar{j}, \bar{p}}- x_{\bar{i}, \bar{p}})^2$ is missed. But, the term $\sigma( x_{\bar{j}, \bar{p}}- x_{\bar{i}, \bar{p}} )$ may be included in $\psi$ in the form of $\sigma( x_{\bar{j}, \bar{p}}- x_{\bar{i}, \bar{p}} )  \sigma( x_{\bar{j}, {p}}- x_{\bar{i}, {p}} )$ if there are cross couplings between the topic $\bar{p}$ and any other topics $p$. If there is a direct coupling on the topic $p$ between agents $\bar{j}$ and $\bar{i}$, then the term $\sigma( x_{\bar{j}, {p}}- x_{\bar{i}, {p}} )$ will be zero; thus, $\sigma( x_{\bar{j}, \bar{p}}- x_{\bar{i}, \bar{p}} )$ does not need to be zero to make $\dot{V}$ zero. Or, if there is no direct coupling on the topic $p$ between agents $\bar{j}$ and $\bar{i}$, still either $\sigma( x_{\bar{j}, \bar{p}}- x_{\bar{i}, \bar{p}} )$ or $\sigma( x_{\bar{j}, {p}}- x_{\bar{i}, {p}} )$ does not need to be zero also. Thus, a complete opinion consensus is not ensured. 
\end{proof}

The results of \textit{Observation~\ref{theorem_phi0_partial}} and \textit{Observation~\ref{theorem_hetero_analysis}} leave a question about the clustered opinions. Let us consider a network depicted in Fig.~\ref{four_agent_example}. From the term $\phi$ in \eqref{eq_v_dot_main}, all the topics between agents $1$ and $2$, and all the topics between agents $3$ and $4$ reach an opinion consensus. Due to the interdependent couplings between agents $2$ and $3$, we have the interdependency terms as 
$\psi = k_{1,2}^{2,3} \sigma( x_{2,1} - x_{3,1} ) \sigma( x_{2,2} - x_{3,2} ) + k_{2,1}^{3,2} \sigma( x_{2,2} - x_{3,2} ) \sigma( x_{2,1} - x_{3,1} )  + k_{2,3}^{2,3} \sigma( x_{2,2} - x_{3,2} ) \sigma( x_{2,3} - x_{3,3} ) + k_{3,2}^{3,2} \sigma( x_{2,3} - x_{3,3} ) \sigma( x_{2,2} - x_{3,2} )$. 
Thus, by Barbalat's lemma, to make $\dot{V}$ zero, we need to have $\psi =0$. From the above equation, for example, if $\sigma( x_{2,2} - x_{3,2} )=0$, then $\psi$ becomes zero. The largest invariant set for having $\dot{V}=0$ is obtained as $\mathcal{D} = \mathcal{D}^d \cup \mathcal{D}^u$, where the desired set is given
\begin{align}
\mathcal{D}^d = \{ x: x_1 = x_2 = x_3 = x_4 \} \nonumber
\end{align}
and undesired set is given as
\begin{align}
\mathcal{D}^u = \{ x: x_1 = x_2, x_3 = x_4, x_2 \neq x_3 \} \nonumber
\end{align}
In the undesired set, the opinions of agents $2$ and $3$ may be related as (i) $x_{2,2} = x_{3,2}$, but $x_{2,1} \neq x_{3,1}$ and $x_{2,3} \neq x_{3,3}$, (ii) $x_{2,2} \neq x_{3,2}$, but $x_{2,1} = x_{3,1}$ and $x_{2,3} = x_{3,3}$, (iii) $x_{2,3} \neq x_{3,3}$, but $x_{2,1} = x_{3,1}$ and $x_{2,2} = x_{3,2}$, or (iv) $x_{2,1} \neq x_{3,1}$, but $x_{2,2} = x_{3,2}$ and $x_{2,3} = x_{3,3}$. Thus, a part of opinions reaches a consensus, while a part of opinions may reach clustered consensus. 

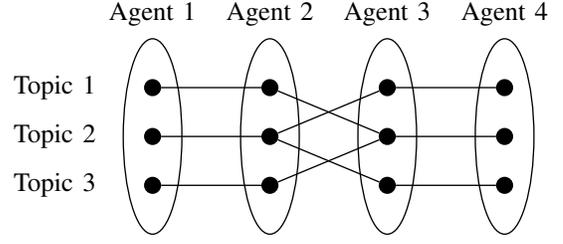
\begin{figure}
\centering
\begin{tikzpicture}[scale=0.65]
\node[place, black] (node11) at (-3.6,3) [label=left:$$] {};
\node[place, black] (node12) at (-3.6,2) [label=left:$$] {};
\node[place, black] (node13) at (-3.6,1) [label=left:$$] {};

\node[place, black] (node21) at (-1.2,3) [label=left:$$] {};
\node[place, black] (node22) at (-1.2,2) [label=left:$$] {};
\node[place, black] (node23) at (-1.2,1) [label=left:$$] {};

\node[place, black] (node31) at (1.2,3) [label=left:$$] {};
\node[place, black] (node32) at (1.2,2) [label=left:$$] {};
\node[place, black] (node33) at (1.2,1) [label=left:$$] {};

\node[place, black] (node41) at (3.6,3) [label=left:$$] {};
\node[place, black] (node42) at (3.6,2) [label=left:$$] {};
\node[place, black] (node43) at (3.6,1) [label=left:$$] {};

\draw[black, line width=.5pt] (-3.6,2) ellipse (0.6cm and 2cm) ;
\draw[black, line width=.5pt] (-1.2,2) ellipse (0.6cm and 2cm);
\draw[black, line width=.5pt] (1.2,2) ellipse (0.6cm and 2cm);
\draw[black, line width=.5pt] (3.6,2) ellipse (0.6cm and 2cm);

\node[black] at (-3.6, 4.5) {Agent 1};
\node[black] at (-1.2, 4.5) {Agent 2};
\node[black] at (1.2, 4.5) {Agent 3};
\node[black] at (3.6, 4.5) {Agent 4};

\node[black] at (-5.6, 3) {Topic 1};
\node[black] at (-5.6, 2) {Topic 2};
\node[black] at (-5.6, 1) {Topic 3};


\draw (node11) [line width=0.5pt] -- node [left] {} (node21);
\draw (node12) [line width=0.5pt] -- node [right] {} (node22);
\draw (node13) [line width=0.5pt] -- node [below] {} (node23);

\draw (node21) [line width=0.5pt] -- node [left] {} (node32);
\draw (node22) [line width=0.5pt] -- node [right] {} (node31);

\draw (node22) [line width=0.5pt] -- node [left] {} (node33);
\draw (node23) [line width=0.5pt] -- node [right] {} (node32);

\draw (node31) [line width=0.5pt] -- node [left] {} (node41);
\draw (node32) [line width=0.5pt] -- node [right] {} (node42);
\draw (node33) [line width=0.5pt] -- node [below] {} (node43);

\end{tikzpicture}
\caption{A network composed of four agents with three topics.}
\label{four_agent_example}
\end{figure}

It is clear that if there are some topics that are $p$-coupled, then a complete clustered consensus cannot take place. Also, even though the network is not $p$-coupled for all $p$, if the network is connected, then a complete clustered consensus is not ensured since the connected neighboring topics would reach a consensus.  Thus, a complete opinion consensus rarely occurs as far as the network is connected. But, a partial opinion consensus would occur easily if it is not all-topic coupled. In fact, if the network is not all-topic coupled, the network would have opinion-based clustered consensus. 
It means that if agents of network are connected, some opinions would be agreed among agents, but some opinions would be divided into clusters. Or, most of opinions would be clustered, depending on interaction network topology $\mathcal{G}$ and the topic topologies $\mathcal{G}_p$.  
\begin{observation}\label{thm_complete_opinion_consensus_not} Suppose that a network is connected. Even though $\phi=0$, a complete clustered consensus is not ensured.
\end{observation}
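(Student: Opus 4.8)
The plan is to prove \textit{Observation~\ref{thm_complete_opinion_consensus_not}} by exhibiting an invariant set, reachable from a nonempty set of initial conditions, in which the opinions split into at least two persistent clusters despite the network being connected. Recall that $\phi = 0$ means $k_{p,p}^{i,j} = 0$ for all $p$ and all edges, i.e., there is no direct coupling on any topic; the only couplings are the cross-couplings captured by $\psi$. By \textit{Claim~\ref{coro_phi0}} (and the reasoning preceding it, including the star-graph example), when $\phi = 0$ the equilibrium condition $\dot V = 0$ does not force $x_{i,p} = x_{j,p}$ for all topics: it only requires $\sigma(x_{j,p} - x_{i,p})\,\sigma(x_{j,q} - x_{i,q}) = 0$ for every coupled pair $(p,q)$ on every edge, and this can be met by having, on each edge, a single topic reach consensus while the others remain frozen at distinct values.

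The key steps, in order, are as follows. First, I would fix the Lyapunov candidate $V = \tfrac12\|x\|^2$ and recall from \eqref{eq_v_dot_main} that with $\phi = 0$ we have $\dot V = -\psi \le 0$, so by Barbalat's lemma (since $\ddot V$ is bounded, as argued for \textit{Theorem~\ref{theorem_complete_consensus}}) trajectories converge to the largest invariant set $\mathcal D$ contained in $\{x : \psi = 0\}$. Second, I would characterize $\mathcal D$ exactly as in the analysis around \textit{Claim~\ref{coro_phi0}}: $x \in \mathcal D$ iff for every edge $(i,j)$ and every coupled topic pair $(p,q) \in \mathcal E_{i,j}$, at least one of $\sigma(x_{j,p}-x_{i,p})$, $\sigma(x_{j,q}-x_{i,q})$ vanishes, which in particular admits configurations where only a proper subset of the topics agrees across a given edge. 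Third, I would take a concrete connected example --- the four-agent, three-topic network of Fig.~\ref{four_agent_example} already serves this purpose, where the undesired set $\mathcal D^u = \{x : x_1 = x_2,\ x_3 = x_4,\ x_2 \neq x_3\}$ is nonempty and invariant --- and observe that initial conditions starting in (or converging into) $\mathcal D^u$ do not yield a complete clustered consensus either: agents $1,2$ agree on all topics (and similarly $3,4$), so there is genuine partial agreement within each cluster boundary. Hence neither a complete opinion consensus nor a complete clustered consensus occurs, proving the observation.

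Concretely, one exhibits an initial condition: choose $x_1(t_0) = x_2(t_0)$ and $x_3(t_0) = x_4(t_0)$ with $x_2(t_0) \neq x_3(t_0)$, and arrange the cross-topic differences across the edge $(2,3)$ so that exactly one of $\sigma(x_{2,p}-x_{3,p})$ is zero while the others are nonzero. Since $\phi = 0$ there is no dynamics pulling topic $p$ within the pairs $(1,2)$ or $(3,4)$ closer (they already agree) and the residual $\psi$-term on edge $(2,3)$ is inactive, so $\dot x = 0$ and the configuration is a genuine equilibrium; it is neither a complete opinion consensus ($x_2 \neq x_3$) nor a complete clustered consensus (agents $1$ and $2$ agree on \emph{all} topics, contradicting the definition of completely clustered consensus in which opinions are ``completely divided without ensuring any partial opinion consensus'').

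The main obstacle is making precise the claim that the exhibited partial-consensus configuration is actually \emph{attained} as a limit (not merely that it is an equilibrium): one must invoke the invariance characterization of $\mathcal D$ carefully, and argue either (i) by direct construction that a genuine fixed point in $\mathcal D^u$ exists and is an equilibrium of \eqref{eq_laplacian_dynamics} when $\phi=0$ --- which is immediate since $\dot x = -\mathbb L(x)x$ and the relevant off-diagonal blocks annihilate the corresponding state differences --- or (ii) via LaSalle/Barbalat that some trajectories converge to $\mathcal D^u$. The cleanest route is (i): the single equilibrium-configuration witness suffices to show a complete clustered consensus ``is not ensured,'' which is exactly the (weak, existential) negation the observation asserts. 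No sign-function discontinuity issue arises because at the chosen equilibrium every nonzero topic difference is bounded away from $0$, so the sigmoid approximation is locally smooth there.
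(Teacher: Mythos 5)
Your proposal proves a different (and weaker) statement than the one the paper's proof establishes. The paper's argument for \textit{Observation~\ref{thm_complete_opinion_consensus_not}} is universal, not existential: since every summand of $\psi$ in \eqref{eq_v_dot_main} is of the form $k_{p,q}^{i,j}\,\sigma(x_{j,p}-x_{i,p})\,\sigma(x_{j,q}-x_{i,q})$ with $k_{p,q}^{i,j}\ge 0$ and $\sigma(\cdot)\ge 0$, forcing $\psi\to 0$ forces \emph{every} term to vanish, hence on \emph{every} edge and for \emph{every} coupled pair $(p,q)$ at least one of the two topics must reach agreement across that edge. Therefore no limit configuration can have the agents ``completely divided without ensuring any partial opinion consensus'' --- a complete clustered consensus \emph{cannot occur} on any trajectory. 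Your proof instead exhibits a single equilibrium that happens not to be a completely clustered configuration; that only shows complete clustering fails for \emph{some} initial data, which is far weaker and does not capture the content of the observation (read in context: ``the connected neighboring topics would reach a consensus,'' i.e., complete division is impossible). The actual proof is a one-line sign argument on the terms of $\psi$; no witness construction, no invariant-set machinery, and no appeal to Fig.~\ref{four_agent_example} is needed.

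There are also two technical problems with your construction. First, the network of Fig.~\ref{four_agent_example} has direct (diagonal) couplings on the edges $(1,2)$ and $(3,4)$, so it does not satisfy the hypothesis $\phi=0$ (which requires $k_{p,p}^{i,j}=0$ for all $p$ and all edges); you cannot borrow its set $\mathcal{D}^u$ wholesale. Second, your claimed equilibrium on edge $(2,3)$ --- ``exactly one of $\sigma(x_{2,p}-x_{3,p})$ is zero while the others are nonzero'' --- is generally not an equilibrium and does not even satisfy $\psi=0$: the condition $\psi=0$ requires the set of \emph{disagreeing} topics on each edge to be an independent set of the coupling graph $\mathcal{G}_{i,j}$, so with $d\ge 3$ topics and, say, a complete coupling graph, at most one topic may disagree, whereas your configuration has $d-1$ topics disagreeing. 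Your scenario is consistent only when the coupling graph on that edge is a star centered at the single agreeing topic, a restriction you do not impose.
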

\begin{proof}
Due to the term $\psi$ including $\sigma( x_{j,p} - x_{i,p} ) \sigma( x_{j,q} - x_{i,q} )$, at least one of the topics $p$ and $q$ needs to be agreed. Thus, a complete clustered opinion consensus does not occur.
\end{proof}

Now, with the statements of \textit{Observation~\ref{theorem_phi0_partial}}, \textit{Observation~\ref{theorem_hetero_analysis}} and \textit{Observation~\ref{thm_complete_opinion_consensus_not}}, we can see that if agents of a society are not all-topic coupled, but just connected in the sense of interaction graph $\mathcal{G}$, then both a complete opinion consensus and complete clustered consensus are not ensured. 





\section{Simulations} \label{section_sim}
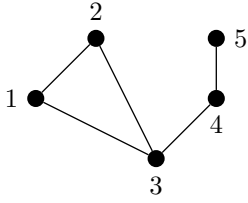
\begin{figure}
\centering
\begin{tikzpicture}[scale=0.8]
\node[place, black] (node1) at (-2,0) [label=left:$1$] {};
\node[place, black] (node2) at (-1,1) [label=above:$2$] {};
\node[place, black] (node3) at (0,-1) [label=below:$3$] {};
\node[place, black] (node4) at (1,0) [label=below:$4$] {};
\node[place, black] (node5) at (1,1) [label=right:$5$] {};


\draw (node1) [line width=0.5pt] -- node [left] {} (node2);
\draw (node1) [line width=0.5pt] -- node [right] {} (node3);
\draw (node2) [line width=0.5pt] -- node [left] {} (node3);
\draw (node3) [line width=0.5pt] -- node [right] {} (node4);
\draw (node4) [line width=0.5pt] -- node [left] {} (node5);
\end{tikzpicture}
\caption{Underlying topology for numerical simulations.}
\label{example1_five_agents}
\end{figure}

\subsection{Case of Positive Semidefinite Laplacian}
Let us consider five agents with the underlying interaction network topology as depicted in Fig.~\ref{example1_five_agents}. The initial opinions of agents are given as $x_1 =(1, 2, 3)^T$, $x_2 =(2, 4, 4)^T$, $x_3 =(3, 1, 5)^T$, $x_4 =(4, 3, 2)^T$, $x_5 =(5, 6, 1)^T$. The initial opinions of agents for the three topics are different each other. To verify the results of Section~\ref{sec3_sub1_psd}, the following coupling matrices are considered. 
\begin{align}
K^{1,2} &=\left[\begin{array}{ccc}   
1 & 1   & 0  \\
1 & 1   & 0  \\
0 & 0   & 0  \\
\end{array} \right]; K^{1,3} =\left[\begin{array}{ccc}   
1 & 0   & 0  \\
0 & 1   & 1  \\
0 & 1   & 1  \\
\end{array} \right] \nonumber\\
 K^{2,3} &=\left[\begin{array}{ccc}   
2 & 0   & 1  \\
0 & 2   & 1  \\
1 & 1   & 2  \\
\end{array} \right]; K^{3,4} =\left[\begin{array}{ccc}   
1 & 1   & 1  \\
1 & 1   & 1  \\
1 & 1   & 1  \\
\end{array} \right]\nonumber\\
 K^{4,5} &=\left[\begin{array}{ccc}   
1 & 0   & 1  \\
0 & 1   & 0  \\
1 & 0   & 1  \\
\end{array} \right] \label{coupling_mat_sim_1}
\end{align}
which are all positive semidefinite. From the above coupling matrices, it is shown that the topic consensus graph $\mathcal{G}_{p,con}$ for all $p=1, 2, 3$ is connected. Thus, as expected from \textit{Theorem~\ref{thm_iff_consensus}}, a consensus for all topics is achieved. Fig.~\ref{sim_psd_consensus} shows that all the values of the topics of agents reach a consensus as time passes. 
\begin{figure*}
	\centering 
	\includegraphics[width=0.96\textwidth]{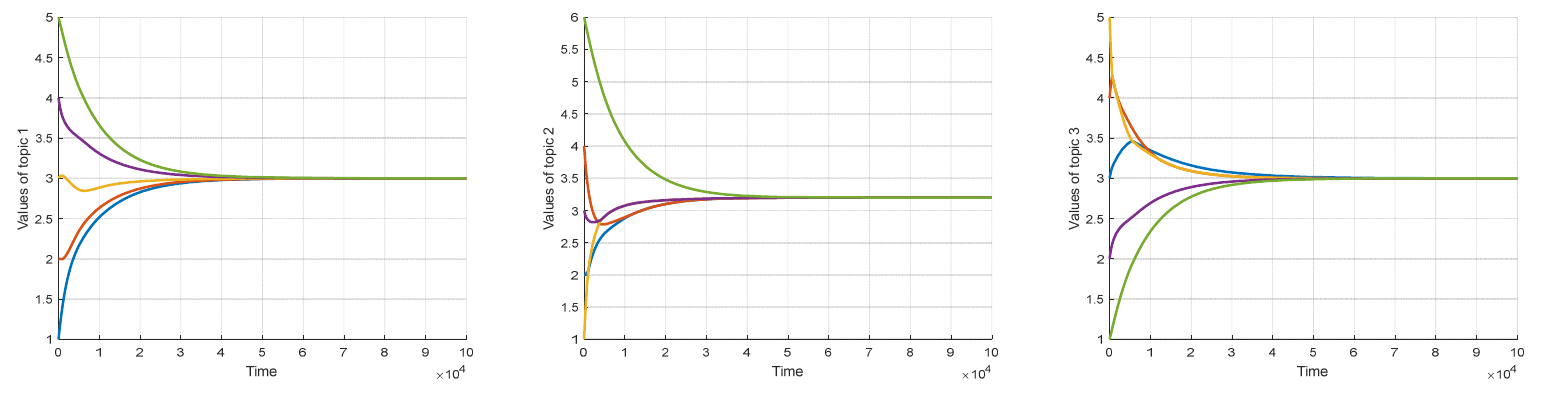}
	\caption{Consensus under positive semidefinite Laplacian: Left - Topic $1$ (i.e., $x_{i,1},~i=1,\ldots, 5$). Center - Topic $2$ (i.e., $x_{i,2},~i=1,\ldots, 5$). Right - Topic $3$  (i.e., $x_{i,3},~i=1,\ldots, 5$).}
	\label{sim_psd_consensus}
\end{figure*}
Next, let us change $K^{1,3}$ and $K^{3,4}$ as 
\begin{align}
K^{1,3} =\left[\begin{array}{ccc}   
0 & 0   & 0  \\
0 & 1   & 1  \\
0 & 1   & 1  \\
\end{array} \right]; K^{3,4} =\left[\begin{array}{ccc}   
0 & 0   & 0  \\
0 & 1   & 1  \\
0 & 1   & 1  \\
\end{array} \right] \nonumber
\end{align}
which are still positive semidefinite. However, due to the new $K^{3,4}$, there is a disconnection in topic $1$ between agents $3$ and $4$. Thus, the topic $1$ is not connected in the topic consensus graph $\mathcal{G}_{1,con}$. As expected from \textit{Theorem~\ref{thm_psd_clusters}}, there will be two clusters. Fig.~\ref{sim_example_psd_clusters} shows that the topic $1$ does not reach a consensus; there are two clusters (one cluster with agents $1, 2$, and $3$, and another cluster with agents $4$ and $5$).
\begin{figure*}
	\centering 
	\includegraphics[width=0.97\textwidth]{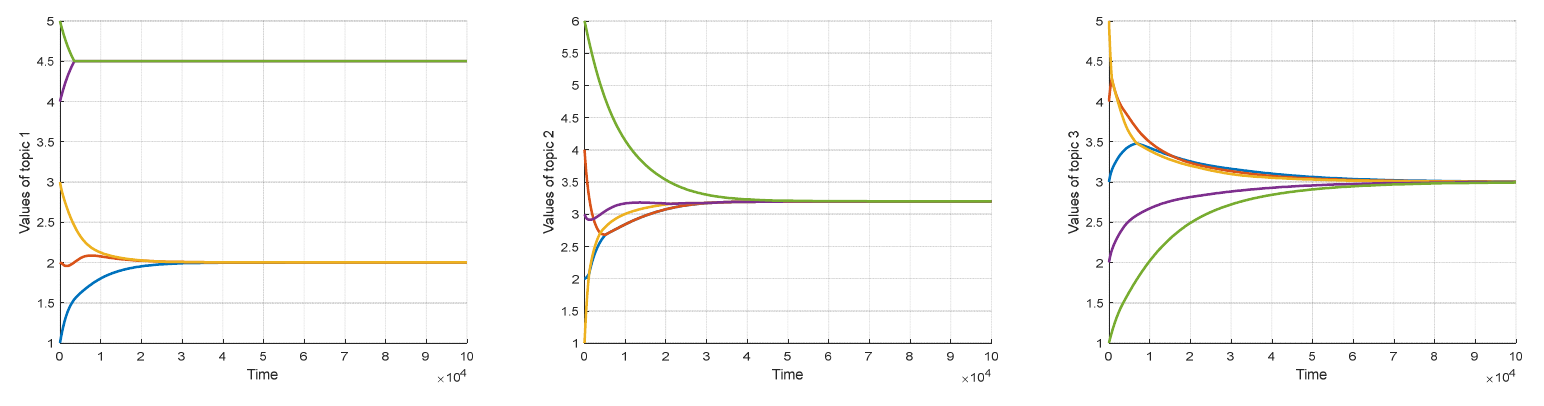}
	\caption{Partial opinion consensus with a disconnected topic consensus graph.}
	\label{sim_example_psd_clusters}
\end{figure*}

\subsection{General Cases}
Let the coupling topologies for each edge be given as:
\begin{align}
K^{1,2} &=\left[\begin{array}{ccc}   
1 & 1   & 0  \\
1 & 1   & 1  \\
0 & 1   & 1  \\
\end{array} \right]; K^{1,3} =\left[\begin{array}{ccc}   
1 & 1   & 0  \\
1 & 1   & 1  \\
0 & 1   & 1  \\
\end{array} \right]\nonumber\\
K^{2,3} &=\left[\begin{array}{ccc}   
1 & 1   & 0  \\
1 & 1   & 1  \\
0 & 1   & 1  \\
\end{array} \right]; K^{3,4} =\left[\begin{array}{ccc}   
1 & 1   & 0  \\
1 & 1   & 1  \\
0 & 1   & 1  \\
\end{array} \right]\nonumber\\
K^{4,5} &=\left[\begin{array}{ccc}   
1 & 1   & 0  \\
1 & 1   & 1  \\
0 & 1   & 1  \\
\end{array} \right] \nonumber
\end{align}
which are indefinite matrices. Since all the topics are $p$-coupled, it is an all-topic coupled network. Also, since the coupling matrices for all edges are equivalent, it is a homogeneous network. With the above coupling matrices, as expected from \textit{Theorem~\ref{theorem_complete_consensus}}, the topics of agents reach a complete opinion consensus. Next, let us change the matrix $K^{3,4}$ as 
\begin{align}
K^{3,4} &=\left[\begin{array}{ccc}   
0 & 1   & 0  \\
1 & 0   & 1  \\
0 & 1   & 1  \\
\end{array} \right] \label{a34_not_consensus}
\end{align} 
In this case, the topic $1$ and $2$ are not $p$-coupled, although the underlying interaction network is connected. As observed in \textit{Observation~\ref{theorem_hetero_analysis}}, Fig.~\ref{sim_example2} shows that the topic $1$ does not reach a consensus, while the topic $2$ still reaches a consensus. In the topic $1$, agents $3$, $4$ and $5$ reach a consensus, while agents $4$ and $5$ reach a consensus. But, when the matrix $A_{3,4}$ is changed again as
\begin{align}
K^{3,4} &=\left[\begin{array}{ccc}   
1 & 1   & 0  \\
1 & 0   & 1  \\
0 & 1   & 1  \\
\end{array} \right] \label{a34_yes_consensus}
\end{align} 
all the topics have reached a consensus although it is not all-topic coupled. Let us change the weight matrices $K^{2,3}$ and $K^{1,3}$ as
\begin{align}
K^{1,3} =\left[\begin{array}{ccc}   
0 & 1   & 0  \\
1 & 1   & 1  \\
0 & 1   & 0  \\
\end{array} \right]; K^{2,3} =\left[\begin{array}{ccc}   
0 & 1   & 0  \\
1 & 1   & 1  \\
0 & 1   & 0  \\
\end{array} \right] 
\end{align} 
In this case, the network is not all-topic coupled. As shown in Fig.~\ref{sim_example3}, the topics $1$ and $3$ do not reach a consensus, while the topic $2$ reaches a consensus. 
Next, let us consider $\phi=0$ and $\mathcal{G}_{i,j} ~\forall (i,j) \in \mathcal{E}$ are complete graphs. Fig.~\ref{sim_example5} shows the simulation result. All the topics do not reach a consensus. 

\begin{figure*}
	\centering 
	\includegraphics[width=0.95\textwidth]{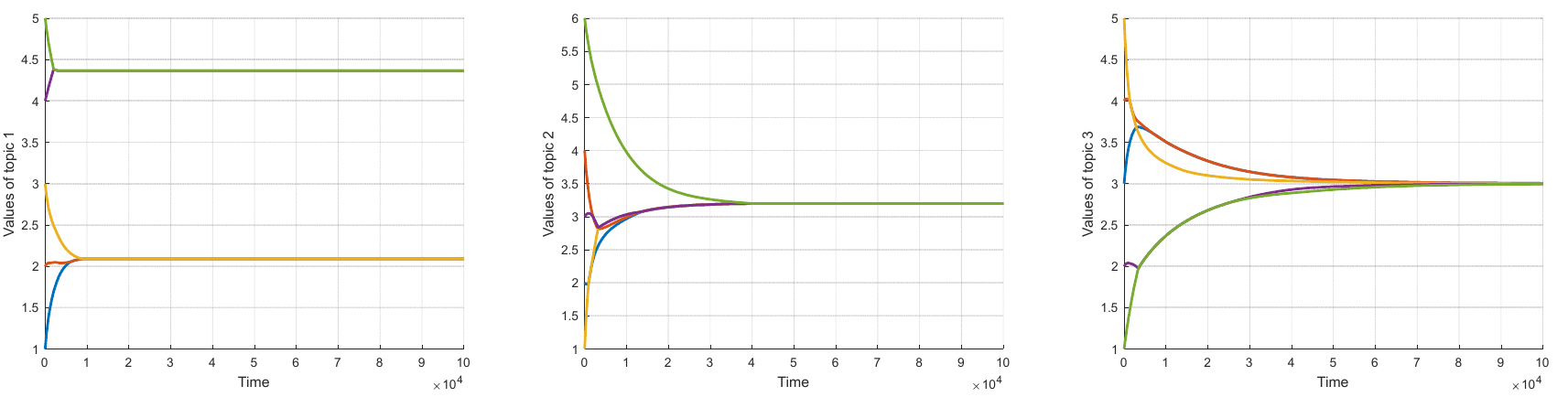}
	\caption{The topics $1$ and $2$ are not $p$-connected, due to zero diagonal terms in $K^{3,4}$: Left - Topic $1$ (i.e., $x_{i,1},~i=1,\ldots, 5$). The agents $4$ and $5$ reach a consensus, and agents $1, 2$ and $3$ reach a consensus for the topic $1$. Center - Topic $2$ (i.e., $x_{i,2},~i=1,\ldots, 5$). Right - Topic $3$  (i.e., $x_{i,3},~i=1,\ldots, 5$).}
	\label{sim_example2}
\end{figure*}
\begin{figure*}
	\centering 
	\includegraphics[width=0.95\textwidth]{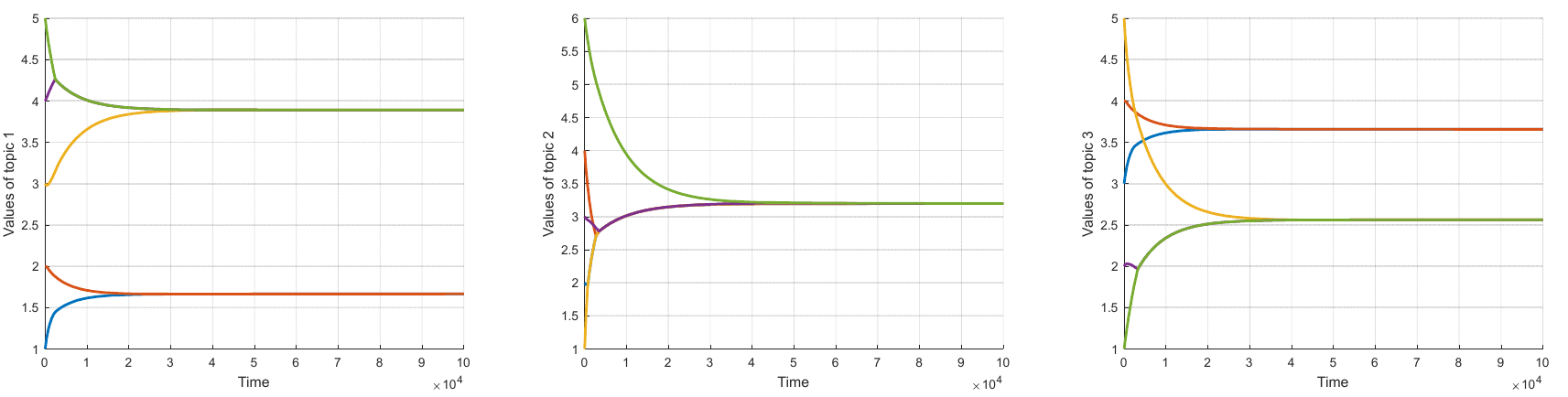}
	\caption{Not all-topic connected, with zero diagonal terms in $K^{1,3}$ and $K^{2,3}$; only the topic $2$ is $p$-connected. The topics $1$ and $3$ do not reach a consensus (clustered), while the topic $2$ reaches a consensus. For both the topics $1$ and $3$, agents $1$ and  $2$ reach a consensus, and agents $3, 4$, and $5$ reach a consensus.}
	\label{sim_example3}
\end{figure*}
\begin{figure*}
	\centering 
	\includegraphics[width=0.95\textwidth]{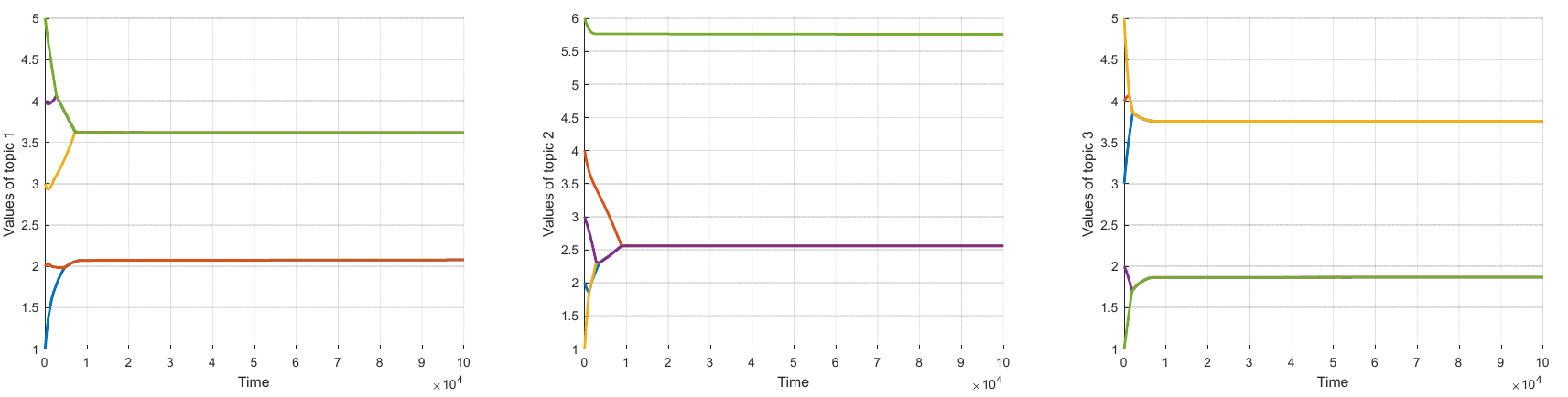}
	\caption{$\phi=0$ with complete interdependency graphs. The agents do not reach a consensus even for a topic.}
	\label{sim_example5}
\end{figure*}

\section{Conclusion} \label{section_con}
The cooperative opinion dynamics on multiple interdependent topics may be considered as a consensus problem of multi-layer networks. 
Each topic can be considered as a basic layer and the term $a^{i,j}_{p,q}$ may describe a cross-layer connection between the layer $p$ and layer $q$, and between agent $i$ and agent $j$. The basic layer is the direct connections that are essential for achieving a consensus on this layer. 
This paper shows that the opinion dynamics with multiple interdependent topics, which is the consensus dynamics in multi-layer networks, possesses some new properties different from the usual consensus in one layer. Clustering phenomenon occurs quite often, even though the number of connections between agents is large. In general, adding a direct connection $a^{i,j}_{p,p}$ forces a consensus between agents $i$ and $j$ on the topic $p$. On the other hand, adding a set of cross-layer connections $\{a^{i,j}_{p,q}\}_{q \neq p, q = 1, \ldots, d}$ may not so significantly helpful for the agents $i$ and $j$ to reach a consensus on topic $p$. But, from simulations, it is shown that the cross-layer connections are still beneficial for a consensus on the topics. Of course, as analyzed in the case of positive semidefinite Laplacian matrices, the cross couplings are also very helpful for a consensus. 
In our future efforts, we would like to evaluate the polarization phenomenon of bipartite graphs under the setup of multiple interdependent couplings, which may be a general one of \cite{Altafini_tac_2013,Hendrickx_cdc_2014} in multidimensional spaces.  It is also interesting to change the overall formulation in discrete-time cases; then the discontinuity arising in the sign functions can be handled more easily. We are also interested in the problem of switches in the coupling matrices (for example, a coupling matrix could switch from a positive semidefinite property to indefinite property). Then, the topology will be time-variant. In our future efforts, we would like to solve this problem in a more general setup.

\section*{Acknowledgment}
The work of this paper has been supported by the National Research Foundation (NRF) of Korea under the grant NRF-2017R1A2B3007034. The results of this paper have been developed on the basis of the observations given in \cite{Ahn_etal2018_opinion_arvix}.


\bibliographystyle{IEEEtran}
\bibliography{opinion_dynamics}

\end{document}